\documentclass{IEEEtran}

\usepackage[ruled,linesnumbered]{algorithm2e}
\usepackage{bbm}
\usepackage{tikz}
\usepackage{tkz-berge}
\newcommand{\h}{\hspace*{0.2in}}
\newcommand{\sh}{~~}
\usepackage{xcolor}
\usepackage{graphicx}
\usepackage{subcaption}

\usepackage[utf8]{inputenc}
\usepackage[english]{babel}
\usepackage{amssymb,amsmath,amsthm}
\usepackage{enumerate}

\newtheorem{definition}{Definition}
\newtheorem{theorem}{Theorem}
\newtheorem{conjecture}[theorem]{Conjecture}
\newtheorem{lemma}[theorem]{Lemma}


\title{NC Algorithms for Popular Matchings in One-Sided Preference Systems and Related Problems} 


\author{\IEEEauthorblockN{Changyong Hu, Vijay K. Garg} \\
\IEEEauthorblockA{\textit{Department of Electrical and Computer  Engineering} \\
\textit{University of Texas at Austin} \\
colinhu9@utexas.edu, garg@ece.utexas.edu}
}

\begin{document}

\maketitle

\begin{abstract}
The popular matching problem is of matching a set of applicants to a set of posts, where each applicant has a preference list, ranking a non-empty subset of posts in the order of preference, possibly with ties. A matching $M$ is popular if there is no other matching $M'$ such that more applicants prefer $M'$ to $M$. We give the first NC algorithm to solve the popular matching problem without ties. We also give an NC algorithm that solves the maximum-cardinality popular matching problem. No NC or RNC algorithms were known for the matching problem in preference systems prior to this work. Moreover, we give an NC algorithm for a weaker version of the stable matching problem, that is, the problem of finding the ``next" stable matching given a stable matching. 
\end{abstract}

\section{Introduction}
The matching problem in preference systems is well-studied in economics, mathematics and computer science, see for example \cite{roth1977weak, zhou1990conjecture, gale1962college}. It models many important real-world applications, including the allocation of medical residents to hospitals \cite{roth1984evolution} and families to government-owned housing \cite{abdulkadiroglu1998random}. The notion of a popular matching was first introduced in \cite{gardenfors1975match} in the context of the stable marriage problem. We say that a matching $M$ is {\em more popular than} $M'$ if the number of nodes that prefer $M$ to $M'$ exceeds the number of nodes that prefer $M'$ to $M$. A matching $M$ is popular if $M$ is optimal under the {\em more popular than} relation. Gupta et al. \cite{gupta2019popular} and Faenza et al. \cite{faenza2019popular} recently showed that the popular matching problem is NP-complete in the general roommate setting. The popular matching problem we consider is from \cite{abraham2007popular} such that the preference system is only one-sided. Abraham et al. \cite{abraham2007popular} gave a linear-time algorithm for the problem in the case of strictly-ordered preference lists and a polynomial-time algorithm for the case of preference lists with ties. There are other problems with other definitions of optimality such as {\em Pareto optimal} matching \cite{abraham2004pareto}, {\em rank-maximal} matching \cite{irving2004rank} etc. We do not discuss them here. 

The matching problem in the normal case, that is the problem of checking if a given graph has a perfect matching, and the corresponding search problem of finding a perfect matching have received considerable attention in the field of parallel computation. Tutte and Lovasz \cite{lovasz1979determinants} observed that there is an RNC algorithm for the decision problem. The search version was shown to be in RNC by Karp, Upfal and Wigderson \cite{karp1986constructing} and subsequently by Mulmuley, Vazirani and Vazirani using the celebrated Isolation Lemma \cite{mulmuley1987matching}. We note that no NC or RNC algorithms were known for the matching problem in preference systems prior to this work. The problem of finding an NC algorithm for the stable marriage problem has been open for a long time. Mayr and Subramanian \cite{mayr1992complexity} showed that the stable marriage problem is CC-complete. Subramanian \cite{subramanian1991computational} defined the complexity class CC as the set of problems log-space reducible to the comparator circuit value problem (CCV). Cook et al. \cite{cook2014complexity} conjectured that CC is incomparable with the parallel class NC, which implies none of the CC-complete problems has an efficient polylog time parallel algorithm. Recently, Zheng and Garg \cite{zheng2019parallel} showed that computing a Pareto optimal matching in the housing allocation model is in CC and computing the core of a housing market is CC-hard.

\subsection{Our Contributions}
\begin{enumerate}
    \item We give NC algorithms for both the popular matching problem and the maximum-cardinality popular matching problem in the setting of strictly-ordered preference lists.
    \item In the case that preference lists contain ties, we show that maximum-cardinality bipartite matching is NC-reducible to popular matching.
    \item We also give an NC algorithm to find the ``next" stable matching if one stable matching is given. We will define ``next" in Section \ref{sec:stable}. 
\end{enumerate}

\subsection{Organization of the paper}
In Section \ref{sec:pre}, we give the terminology and notation of popular matchings and stable matchings. In Section \ref{sec:pop}, we give an NC algorithm for the popular matching problem without ties. In Section \ref{sec:max-pop}, we give an NC algorithm for the maximum-cardinality popular matching problem. In Section \ref{sec:stable}, we give an NC algorithm for finding the ``next" stable matching. Finally, we give open problems related to our work.
\section{Preliminaries} \label{sec:pre}
\subsection{Popular Matching Problem}
Let $\mathcal{A}$ be a set of applicants and $\mathcal{P}$ be a set of posts, associated with each member of $\mathcal{A}$ is a preference list (possibly involving ties) comprising a non-empty subset of the elements of $\mathcal{P}$. An instance of the {\em popular matching problem} is a bipartite graph $G = (\mathcal{A} \cup \mathcal{P}, E)$ and a partition $E = E_1 \dot\cup E_2 \dot\cup \cdots \dot\cup E_r$ of the edge set. The partition $E$ consists of all pairs $(a,p)$ such that post $p$ appears in the preference list of applicant $a$ and we say that each edge $(a,p) \in E_i$ has a rank $i$ if post $p$ is on the $i$-th position of the preference list of applicant $a$. If $(a,p) \in E_i$ and $(a,p') \in E_j$ with $i < j$, we say that $a$ prefers $p$ to $p'$. If $i=j$, we say that $a$ is indifferent between $p$ and $p'$. We say that preference lists are strictly ordered if no applicant is indifferent between any two posts on his/her preference list. Otherwise, we say that preference lists contain ties.

A {\em matching} $M$ of $G$ is a set of edges no two of which share an endpoint. A node $u \in \mathcal{A} \cup \mathcal{P}$ is either unmatched or matched to some node, denoted by $M(u)$. We say that an applicant $a$ prefers matching $M'$ to $M$ if (i) $a$ is matched in $M'$ and unmatched in $M$, or (ii) $a$ is matched in both $M'$ and $M$, and $a$ prefers $M'(a)$ to $M(a)$. Let $\mathcal{M}$ be the set of matchings in $G$ and let $M, M' \in \mathcal{M}$. Let $P(M, M')$ denote the set of applicants who prefer $M$ to $M'$. Define a ``{\em more popular than}" relation $\succ$ on $M$ as follows: if $M,M' \in \mathcal{M}$, then $M'$ is {\em more popular than} $M$, denoted by $M' \succ M$, if $|P(M', M)| > |P(M, M')|$. 

\begin{definition}
A matching $M \in \mathcal{M}$ is popular if there is no matching $M'$ such that $M' \succ M$.
\end{definition}

The popular matching problem is to determine if a given instance admits a popular matching, and to find such a matching, if one exists. Note that popular matchings may have different sizes, a largest popular matching may be smaller than a maximum-cardinality matching since no maximum-cardinality matching needs to be popular. The maximum-cardinality popular matching problem then is to determine if a given instance admits a popular matching, and to find a largest such matching, if one exists. Figure \ref{fig:pop} shows an example of a popular matching instance. The reader can check that $\{(a_1,p_1),(a_2,p_2),(a_3,p_4),(a_4,p_3),(a_5,p_5),(a_6,p_7),$\\
$(a_7,p_8),(a_8,p_9)\}$ is a popular matching.

\begin{figure}[h]
    \centering
    $\begin{aligned}
        a_1: &\sh p_1\sh p_4\sh p_5\sh p_2\sh p_6\\
        a_2: &\sh p_4\sh p_5\sh p_7\sh p_2\sh p_8\\
        a_3: &\sh p_4\sh p_1\sh p_3\sh p_8\\
        a_4: &\sh p_1\sh p_7\sh p_4\sh p_3\sh p_9\\
        a_5: &\sh p_5\sh p_1\sh p_7\sh p_2\sh p_6\\
        a_6: &\sh p_7\sh p_6\\
        a_7: &\sh p_7\sh p_4\sh p_8\sh p_2\\
        a_8: &\sh p_7\sh p_4\sh p_1\sh p_5\sh p_9\sh p_3
    \end{aligned}$
    \caption{A popular matching instance $I$}
    \label{fig:pop}
\end{figure}

As in \cite{abraham2007popular}, we add a unique {\em last resort post} $l(a)$ for each applicant $a$ and assign the edge $(a, l(a))$ higher rank than any edge incident on $a$. In this way, we can assume that every applicant is matched, since any unmatched applicant can be matched to his/her unique last resort post. From now on, we only focus on matchings that are {\em applicant-complete}, and the size of a matching is the number of applicants not matched to their last resort posts.

\begin{definition}
A matching $M\in \mathcal{M}$ is applicant-complete if each applicant $a\in\mathcal{A}$ is matched to some post $p\in\mathcal{P}$. 
\end{definition}

\section{Finding Popular Matching in NC} \label{sec:pop}

\subsection{Characterizing Popular Matchings}
We restrict our attention to strictly-ordered preference lists. For each applicant $a$, let $f(a)$ denote the first-ranked post on $a$'s preference list. We call any such post $p$ an $f$-post, and denote by $f^{-1}(p)$ the set of applicants $a$ for which $f(a) = p$. For each applicant $a$, let $s(a)$ denote the first non-$f$-post on $a$'s preference list (note that $s(a)$ always exists, due to the introduction of $l(a)$). We call any such post $p$ an $s$-post, and remark that $f$-posts are disjoint from $s$-posts. We also call any last resort post $p$ an $l$-post.

The following theorem, proved in \cite{abraham2007popular}, completely characterizes popular matchings.

\begin{theorem} \label{thm:popular-matching}
A matching $M$ is popular if and only if 
\begin{enumerate}[(i)]
    \item every f-post is matched in $M$, and
    \item for each applicant $a$, $M(a) \in \{f(a), s(a)\}$.
\end{enumerate}
\end{theorem}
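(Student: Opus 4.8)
The plan is to prove both directions by contradiction, exhibiting an explicit more-popular matching whenever one of the two conditions fails, and conversely by showing any matching satisfying (i) and (ii) admits no more-popular rival. For the ``only if'' direction, suppose $M$ is popular. First I would show condition (ii): if some applicant $a$ has $M(a) \notin \{f(a), s(a)\}$, I construct $M'$ by \emph{promoting} $a$. The natural candidate is to give $a$ the post $f(a)$; if $f(a)$ was matched to some $a'$ in $M$, then $a'$ loses its partner, so I need to reroute $a'$ as well. The key observation is that because every applicant has a last-resort post, one can always push the displaced applicant down its list, but a single promotion only guarantees one net gain and one net loss, which is not enough. The right move is to consider the set of all applicants who could be simultaneously promoted and build an augmenting-type structure; more simply, one shows that if (ii) fails for $a$ one can match $a$ to $f(a)$ and the at-most-one displaced applicant $a'$ to its own $f(a')$ or to $s(a')$, cascading along an alternating path. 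Since $a$ strictly improves and the cascade is designed so every other affected applicant is no worse off (each is moved to an $f$- or $s$-post it weakly prefers, or the chain terminates at an unmatched post / a last-resort post), we get $|P(M',M)| \ge 1 > 0 = |P(M,M')|$, contradicting popularity.

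For condition (i) of the ``only if'' direction, suppose some $f$-post $p$ is unmatched in $M$. Pick any $a \in f^{-1}(p)$ (nonempty by definition of $f$-post). In $M$, applicant $a$ is matched to $M(a) \ne p = f(a)$, so $a$ strictly prefers $p$. Form $M'$ from $M$ by rematching $a$ to $p$ and leaving everyone else fixed; this is a valid matching since $p$ was free. Then $a \in P(M',M)$ and no applicant prefers $M$ to $M'$, so again $M' \succ M$, a contradiction. Hence every $f$-post is matched.

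For the ``if'' direction, assume $M$ satisfies (i) and (ii), and suppose for contradiction that some $M'$ is more popular. The hard part — and I expect this to be the main obstacle — is bounding $|P(M',M)| - |P(M,M')|$ from above by $0$ using only the structural constraints (i) and (ii). I would argue via a charging/voting scheme on the symmetric difference $M \oplus M'$: for each applicant $a$ that prefers $M'$ to $M$, the post $M'(a)$ must be strictly better for $a$ than $M(a) \in \{f(a), s(a)\}$; if $M(a) = s(a)$ then $M'(a)$ lies strictly between $f(a)$ and $s(a)$ on $a$'s list, hence $M'(a)$ is an $f$-post of some other applicant (since all intermediate posts are $f$-posts by definition of $s(a)$), and if $M(a) = f(a)$ no such improvement exists, so $a$ must already be at $f(a)$ and cannot gain. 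Tracking the components (paths and cycles) of $M \oplus M'$, each component contributes at most as many ``$M'$-preferrers'' as ``$M$-preferrers'': every applicant promoted in $M'$ onto some $f$-post $q$ displaces (along the alternating path) an applicant who was matched to $q = f(\cdot)$ in $M$ and now does worse, and condition (i) ensures no promotion can target a free post. Summing over components gives $|P(M',M)| \le |P(M,M')|$, so $M'$ is not more popular than $M$ — the desired contradiction. I would close by remarking that the delicate point is the alternating-path bookkeeping showing each promoted applicant is paid for by a distinct demoted one, which is exactly where conditions (i) and (ii) are both used.
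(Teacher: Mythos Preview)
The paper does not actually prove this theorem; it quotes it from Abraham, Irving, Kavitha and Mehlhorn \cite{abraham2007popular} and uses it as a black box. So there is no in-paper argument to compare your proposal against. On its own merits: your argument for condition~(i) in the ``only if'' direction is clean and correct, and your charging/injection argument for the ``if'' direction is essentially the standard one (each applicant who improves under $M'$ must land on an $f$-post, which by~(i) was occupied in $M$ by someone at their top choice, giving an injection from $P(M',M)$ into $P(M,M')$).

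Your treatment of condition~(ii), however, has a genuine gap. You commit to sending the offending applicant $a$ to $f(a)$ and then cascading, claiming the cascade can be arranged so that ``every other affected applicant is no worse off.'' This breaks when the applicant $a'$ currently occupying $f(a)$ satisfies $f(a') = f(a)$: then $a'$ is already at its top choice, any relocation makes $a'$ strictly worse, and you are left with one gain against one loss --- not more popular. Concretely, with preferences $a_1 : p_1\, p_2\, p_3$ and $a_2 : p_1\, p_4$ and $M = \{(a_1,p_3),(a_2,p_1)\}$, your cascade produces $\{(a_1,p_1),(a_2,p_4)\}$, a $1$--$1$ tie, even though $\{(a_1,p_2),(a_2,p_1)\}$ beats $M$. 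The repair, which is exactly where the hypothesis $M(a)\notin\{f(a),s(a)\}$ earns its keep, is a short case split. If $M(a)$ is ranked below $s(a)$, promote $a$ to $s(a)$ instead of $f(a)$: since $s(a)$ is not an $f$-post, whoever sits there (if anyone) is not at their first choice and can be promoted to it, yielding two gains versus at most one loss. If $M(a)$ lies strictly between $f(a)$ and $s(a)$, then $M(a)$ is an $f$-post, so some $b\neq a$ with $f(b)=M(a)$ is currently not at $f(b)$ and can move into the slot $a$ vacates, again two gains versus at most one loss. As written, your cascade does not establish~(ii).
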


Let $G'$ be the reduced graph of $G$ that only includes $f$-posts and $s$-posts. For a reduced graph $G'$, let $M$ be a popular matching, and let $a$ be an applicant. Denote by $O_M(a)$ the post on $a$'s reduced preference list to which $a$ is not assigned in $M$. Note that since $G'$ is a reduced graph of $G$, $O_M(a)$ is well-defined. If $a$ is matched to $f(a)$ in $M$, then $O_M(a) = s(a)$, whereas if $a$ is matched to $s(a)$ in $M$, then $O_M(a) = f(a)$.

\subsection{Algorithmic Results}
Now we show Algorithm \ref{alg:popular-matching} is an NC algorithm for the popular matching problem with strictly-ordered preference lists. First we construct the reduced graph $G'$ from $G$. Then we find an applicant-complete matching $M$ in $G'$. Hence for each applicant $a$, $M(a) \in \{f(a),s(a)\}$. Then for any $f$-post $p$ that is unmatched in $M$, we match $p$ with any applicant in $f^{-1}(p)$.

The most non-trivial part is line 4 that determines an applicant-complete matching $M$ in the reduced graph $G'$. Perfect matching in bipartite graph is in Quasi-NC \cite{fenner2016bipartite}, but we do not know whether it is in NC. Recent results in \cite{anari2018planar} show that perfect matching in planar graph is in NC. But the reduced graph for popular matching problem is not necessarily planar. It is easy to check that the reduced graph $G'$ may contain a subgraph that is a subdivision of the complete bipartite graph $K_{3,3}$.

\begin{algorithm}
{\bf Input:} Graph $G = (\mathcal{A} \cup \mathcal{P} , E)$.\\
{\bf Output:} A popular matching $M$ or determine that no such matching.\\
$G' :=$ reduced graph of $G$;\\
{\bf if} $G'$ admits an applicant-complete matching $M$ {\bf then}\\
\h {\bf for each} $f$-post $p$ unmatched in $M$ {\bf in parallel do}\\
\h\h let $a$ be any applicant in $f^{-1}(p)$;\\
\h\h promote $a$ to $p$ in $M$;\\
\h {\bf return} $M$;\\
{\bf else}\\
\h {\bf return} ``no popular matching";\\
\caption{Popular Matching}
\label{alg:popular-matching}
\end{algorithm}

We first show how to construct the reduced graph $G'$ from $G$ in parallel (line 3). For each post $p$, we check if there is any incident edge $(a,p) \in E_1$. Let $\mathcal{F}$ be the set of such posts, which corresponds to all $f$-posts. Then for each post $p \in \mathcal{F}$, we remove all incident edges $(a,p) \notin E_1$. After that, for each applicant $a$, we find the highest ranked incident edge $(a,p) \notin E_1$, which corresponds to $s(a)$, and remove all other incident edges. The remaining graph must be $G'$. It is clear that each step can be done in logarithmic time with a polynomial number of operations.

It remains to show how to find an applicant-complete matching in $G'$ (line 4), or determine that no such matching exists in NC. Now we explain Algorithm \ref{alg:applicant-complete} that finds an applicant-complete matching.

\begin{algorithm}
{\bf Input:} Graph $G' = (\mathcal{A} \cup \mathcal{P}, E')$.\\
{\bf Output:} An applicant-complete matching $M$ or determine that no such matching exists.\\
$M := \emptyset$;\\
{\bf while} some post $p$ has degree $1$\\
\h Find all maximal paths that end at $p$;\\
\h {\bf for each} edge $(p',a')$ at an even distance from some $p$ {\bf in parallel do}\\
\h\h $M:=M \cup \{(p',a')\}$;\\
\h\h $G':=G'-\{p',a'\}$;\\
{\bf for each} post $p$ has degree $0$ {\bf in parallel do}\\
\h $G':=G'-{p}$\\
// Every post now has degree at least $2$;\\
// Every applicant still has degree $2$;\\
{\bf if} $|\mathcal{P}| < |\mathcal{A}|$ {\bf then}\\
\h {\bf return} ``no applicant-complete matching";\\
{\bf else}\\
\h // $G'$ decomposes into a family of disjoint even cycles\\
\h $M':=$ any perfect matching of $G'$;\\
\h {\bf return} $M \cup M'$;\\
\caption{Applicant-Complete Matching}
\label{alg:applicant-complete}
\end{algorithm}

The {\em while} loop (line 4) gradually matches applicants to posts of degree 1 or 2 until there is no post of degree 1. Then, either the remaining graph admits a perfect matching or we can conclude that there is no applicant-complete matching. We show the details below.

First, we identify all vertices of degree $2$ in $G'$. Note that all applicants have degree $2$, but posts may have any degree. We only need to identify posts of degree $2$. Some of these vertices might be connected to each other, in which case we get paths formed by these vertices. We can extend these paths, by the doubling trick in polylog time to find maximal paths consisting of degree 2 vertices. Let the vertices of the path be $(v_1, v_2, \cdots, v_k)$. Further, let $v_0$ be the vertex we would get if we extended this path from $v_1$ side and $v_{k+1}$ be the one we would get from $v_k$ side. Note that $\deg(v_i) = 2$ for $i=1,\cdots,k$ but not for $i=0,k+1$. 

Then, in parallel, we consider each maximal path with at least one of $v_0$ and $v_{k+1}$ of degree $1$. W.l.o.g, let $v_0$ be the vertex of degree $1$. For each such path, we add each edge at an even distance from $v_0$ to $M$ (e.g. the edge $(v_0, v_1)$ is at zero distance from $v_0$ and must be added to $M$) and delete $v_0,\cdots, v_k$ and their incident edges. Note that $v_0$ and $v_{k+1}$ can only be posts since all applicants have degree exactly $2$. Hence, any maximal path must have even length and $v_{k+1}$ is not matched. In the case both end points have degree $1$, we only consider this path once and choose $v_0$ or $v_{k+1}$ to be matched arbitrarily. After one round, there would be some new vertices of degree $1$ because the degree of $v_{k+1}$ decreases by 1 for each maximal path that ends at $v_{k+1}$. Run the same process until there is no post that has degree 1. After removing any isolated posts, we can conclude that either there is no applicant-complete matching, or the remaining graph is a family of disjoint even cycles.

\subsubsection{Correctness}
Algorithm \ref{alg:applicant-complete} begins by repeatedly matching maximal paths $(v_0, v_1, \cdots, v_{k+1})$ with $\deg(v_0) = 1$. After first round, no subsequent augmenting path can include any vertices $v_i$ for $i = 0,1,\cdots, k$ since they are matched and any alternating path that includes them must end at $v_0$, which is matched and has degree $1$. So we can remove all matched vertices from consideration. The same argument holds for subsequent rounds. Also note that the while loop always terminates because whenever we find a post of degree $1$, we match at least one edge $(v_0,v_1)$ and remove at least two vertices that are $\{v_0, v_1\}$. 

Now we have a matching and we only need to match remaining posts and applicants. All remaining posts have degree at least $2$, while all remaining applicants still have degree exactly $2$. Now, if $|\mathcal{P}| < |\mathcal{A}|$, $G'$ cannot admit an applicant-complete matching by Hall's Marriage Theorem \cite{hall2009representatives}. Otherwise, we have that $|\mathcal{P}| \geq |\mathcal{A}|$, and by a double counting argument, we have $2|\mathcal{P}| \leq \sum_{p \in \mathcal{P}} \deg(p) = 2|\mathcal{A}|$. Hence, it must be that $|\mathcal{P}| = |\mathcal{A}|$ and every post has degree exactly $2$. $G'$ becomes $2$-regular bipartite graph and consists of disjoint union of even cycles. By choosing any edge $e$ in an even-length cycle $C$, even distance (resp. odd distance) from $e$ is well-defined. Choosing all edges of even distance yields a perfect matching in $G'$. Now we have an applicant-complete matching in $G'$. Hence for each applicant $a$, $M(a) \in \{f(a),s(a)\}$. Then for any $f$-post $p$ that is unmatched in $M$, we match $p$ with any applicant in $f^{-1}(p)$. By Theorem \ref{thm:popular-matching}, the resulting matching is a popular matching.

\subsubsection{Complexity}
Lemma \ref{lem:while} proves that the {\em while} loop in Algorithm \ref{alg:applicant-complete} runs $O(\log(n))$ number of times.

\begin{lemma} \label{lem:while}
The \em{while} loop (line 4) runs $O(\log(n))$ number of times.
\end{lemma}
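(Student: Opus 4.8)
The plan is to reinterpret the reduced graph $G'$ as a multigraph $H$ on the post set $\mathcal{P}$ in which each applicant $a$ becomes an edge joining the two posts on $a$'s reduced preference list. This is legitimate because, as the text notes, every applicant has degree exactly $2$ in $G'$, so the degree of a post in $H$ equals its degree in $G'$, and a post of degree $1$ is exactly a degree-$1$ vertex of $H$. Under this identification one iteration of the \emph{while} loop deletes from $H$ every vertex lying on a maximal pendant path --- a maximal path through vertices of degree $\le 2$ one of whose endpoints has degree $1$ --- while keeping the far (degree-$\ge 3$ or degree-$1$) endpoint. Since the algorithm never touches a vertex all of whose incident edges lie on a cycle, the loop halts precisely when $H$ has been pared down to its $2$-core (a disjoint union of cycles, i.e.\ the even cycles of the algorithm's final comment). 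Hence it suffices to show that ``prune all maximal pendant paths'' reaches the $2$-core in $O(\log n)$ rounds.

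First I would observe that the $2$-core $C$ of $H$ is unchanged throughout the loop, so the part that actually gets deleted is the forest $F = H\setminus C$; each tree $T$ of $F$ is either a whole component of $H$ or is attached to $C$ by a single edge at a vertex $\rho\in C$, and $\rho$ is never removed. The structural heart of the argument is the claim that after one round the part of $T$ that survives is exactly the minimal subtree $S_T$ connecting $\Sigma_T$, the set consisting of all branch vertices of $T$ (vertices of degree $\ge 3$) together with $\rho$ (for a standalone component, just the branch vertices), where $\rho$ is regarded as an extra leaf attached via its single edge. Indeed, a vertex off $S_T$ sits in a branch-vertex-free subtree hanging off $S_T$, hence on a pendant path, and is deleted; a vertex on $S_T$ is either a branch vertex or a degree-$2$ vertex strictly between two members of $\Sigma_T$, hence on no pendant path, and survives.

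Then comes the counting. Let $b_i$ be the number of branch vertices inside the forest part after $i$ rounds. Fix a tree $T$ with branch-vertex set $B_T$. Because $T$ meets $C$ in a single edge, $\rho$ has degree at most $1$ in $S_T$, so $\rho$ is never an internal vertex of $S_T$; consequently every vertex of degree $\ge 3$ in $S_T$ lies in $B_T$, and by minimality every leaf of $S_T$ lies in $\Sigma_T$. Writing $\lambda_T$ for the number of leaves of $S_T$ and $\mu_T$ for its number of degree-$\ge 3$ vertices, the tree identity $\lambda_T = 2 + \sum_{v:\ \deg_{S_T}(v)\ge 3}(\deg_{S_T}(v)-2)$ gives $\lambda_T\ge \mu_T+2$, while partitioning $\Sigma_T$ by the degree its members have in $S_T$ gives $|\Sigma_T|\ge \lambda_T+\mu_T\ge 2\mu_T+2$; since $|B_T|\ge |\Sigma_T|-1$ we obtain $\mu_T\le (|B_T|-1)/2$ whenever $B_T\neq\emptyset$ (and $\mu_T=|B_T|=0$ otherwise). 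Summing over all trees of $F$, $b_{i+1}=\sum_T\mu_T < b_i/2$ as long as $b_i>0$. Hence $b_i=0$ after $O(\log n)$ rounds; once $b_i=0$ the forest part is a disjoint union of paths, each a single pendant path, so one further round empties it and the loop stops --- giving the $O(\log n)$ bound.

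The step I expect to be the real obstacle is obtaining the honest factor-$2$ contraction $b_{i+1}<b_i/2$ rather than the weak $b_{i+1}<b_i$, which would only yield an $O(n)$ bound: the complete binary tree, which needs $\Theta(\log n)$ rounds and in which the branch-vertex count drops by almost exactly a factor of $2$ each round, shows this is essentially the best possible, and the Steiner-tree description of one round together with the leaf-versus-branch-vertex inequality is exactly the tool that makes it work. A minor technical point is the bookkeeping at the attachment vertices $\rho$ between the forest and the $2$-core; treating each such $\rho$ as an always-present ``virtual'' special vertex of degree at most $1$ inside its hanging tree handles it cleanly.
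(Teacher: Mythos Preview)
Your argument is correct, but the paper takes a considerably shorter route. Rather than passing to the post-multigraph, isolating the $2$-core, and analysing each hanging tree via its Steiner subtree on the branch vertices, the paper simply tracks the number of degree-$1$ vertices. The key observation is that any vertex that has degree $1$ in round $r>1$ must have had degree at least $3$ in round $r-1$ (a surviving degree-$2$ vertex can never lose an edge without itself being deleted), so it was the far endpoint $v_{k+1}$ of at least two distinct maximal paths; each such path contains a distinct degree-$1$ vertex $v_0$. Hence $d_{r-1}\ge 2d_r$, and since $d_R\ge 1$ in the last nontrivial round we get $d_1\ge 2^{R-1}\le n$, i.e.\ $R=O(\log n)$.

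Your approach and the paper's are in a sense dual: the paper shows the \emph{leaf} count doubles going backward, while you show the \emph{branch-vertex} count halves going forward, and the leaf/branch inequality $\lambda\ge\mu+2$ is exactly what links them. What your version buys is a precise structural description of one round --- ``pass to the Steiner tree on branch vertices plus the attachment point'' --- which makes the halving transparent and would be reusable in other pendant-path-pruning arguments. What the paper's version buys is brevity: three sentences, no need to define the $2$-core, no bookkeeping at the attachment vertices $\rho$, and no Steiner-tree machinery. Your worry about the ``honest factor-$2$'' contraction is well-founded in your framework but is automatic in the paper's, since each new degree-$1$ vertex accounts for at least two old ones directly.
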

\begin{proof}

For any vertex $v$ of $\deg(v) \geq 3$ that is reduced to degree of $1$, it must be the end point of $\deg(v)-1$ maximal paths. If in round $r$, s.t. $r > 1$, there are $t$ vertices of degree $1$ ( for some constant $t$), then we must have deleted at least $2t$ vertices in round $r-1$. After round $r$, we have deleted at least $(2^r-1)t$ vertices. Hence, it is clear that the while loop can be run at most $\lceil\log(n)\rceil + 1$ times since the total number of vertices is bounded by $n$. 
\end{proof}

Finding all maximal paths of degree $2$ vertices and calculating the distance from $v_0$ in the path can be done in polylog time. Furthermore, the {\em while} loop runs at most a logarithmic number of times. Finding a prefect matching in a $2$-regular bipartite graph i.e. graph consisting of even-length cycles is in NC. More generally, searching for a perfect matching in regular bipartite graphs can be done in NC \cite{lev1981fast}. So, Algorithm \ref{alg:applicant-complete} is in NC. The {\em for} loop in Algorithm \ref{alg:popular-matching} can be done in constant time since for every $f$-post $p$, $f^{-1}(p)$ is disjoint from each other. 

We summarize the preceding discussion in the following theorem.
\begin{theorem}
We can find a popular matching, or determine that no such matching exists in NC.
\end{theorem}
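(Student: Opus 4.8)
The plan is to establish correctness and the NC complexity bound for Algorithm~\ref{alg:popular-matching} by assembling the pieces already developed in the excerpt. The overall structure of the argument proceeds in three stages: (1) reduce the popular matching problem to finding an applicant-complete matching in the reduced graph $G'$ via Theorem~\ref{thm:popular-matching}; (2) argue that Algorithm~\ref{alg:applicant-complete} correctly finds such a matching or reports that none exists; (3) bound the parallel running time of both algorithms.

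First I would recall the reduction. By Theorem~\ref{thm:popular-matching}, $M$ is popular iff every $f$-post is matched and $M(a) \in \{f(a), s(a)\}$ for every applicant $a$. Constructing $G'$ (line~3) takes every applicant's edge set down to exactly the two edges to $f(a)$ and $s(a)$, so any applicant-complete matching $M$ of $G'$ automatically satisfies condition~(ii). If some $f$-post $p$ is unmatched, the \emph{for} loop promotes an arbitrary $a \in f^{-1}(p)$ to $p$; since the sets $f^{-1}(p)$ are pairwise disjoint and each contains the applicant matched to $s(a)$ in $G'$ only if that applicant prefers $p=f(a)$, this is a legal modification that keeps the matching applicant-complete and now satisfies condition~(i). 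Conversely, if $G'$ has no applicant-complete matching, then no matching of $G$ can satisfy condition~(ii) while matching every applicant, so no popular matching exists. This establishes that Algorithm~\ref{alg:popular-matching} is correct, conditional on Algorithm~\ref{alg:applicant-complete} being correct, which is argued in the Correctness subsection: the peeling of maximal degree-$2$ paths ending at a degree-$1$ post forces the unique local choice, removal of isolated posts is harmless, and what remains is a disjoint union of even cycles (by the double-counting argument that forces $|\mathcal P| = |\mathcal A|$ and $2$-regularity), each of which admits an easily-computed perfect matching.

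Next I would handle the complexity claim, which is the crux. The reduced graph $G'$ is built in $O(\log n)$ time with polynomially many processors, as described. For Algorithm~\ref{alg:applicant-complete}: finding all maximal paths of degree-$2$ vertices and computing parities of distances along them is done by pointer-doubling in $O(\log^2 n)$ time; by Lemma~\ref{lem:while} the \emph{while} loop executes $O(\log n)$ times, so this phase costs $O(\log^3 n)$ time overall. Removing isolated posts and testing $|\mathcal P| < |\mathcal A|$ is trivial. The final step---a perfect matching in a $2$-regular (indeed, by the cited result \cite{lev1981fast}, any regular) bipartite graph---is in NC: orient each even cycle and take alternate edges. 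Finally, the promotion \emph{for} loop in Algorithm~\ref{alg:popular-matching} is $O(1)$ time because the $f^{-1}(p)$ are disjoint, so no write conflicts arise. Composing polylogarithmic-time, polynomial-processor stages yields an NC algorithm.

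The main obstacle I anticipate is not any single step but the bookkeeping around the \emph{while} loop: one must be careful that after removing the matched vertices $v_0,\dots,v_k$ of a path, the newly-created degree-$1$ posts are exactly captured in the next round, that no augmenting path re-enters a removed region (handled by the observation that $v_0$ is matched and has degree~$1$, so any alternating path through the region dead-ends), and that the ``match $v_0$ or $v_{k+1}$ arbitrarily when both have degree~$1$'' tie-break does not cause two parallel threads to claim the same vertex. Granting the earlier Correctness discussion and Lemma~\ref{lem:while}, these are already addressed, so the proof reduces to citing Theorem~\ref{thm:popular-matching}, Lemma~\ref{lem:while}, and \cite{lev1981fast}, and noting that a constant number of NC stages compose to an NC algorithm.
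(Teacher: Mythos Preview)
Your proposal is correct and follows essentially the same approach as the paper: you reduce via Theorem~\ref{thm:popular-matching} to finding an applicant-complete matching in $G'$, invoke the Correctness discussion and Lemma~\ref{lem:while} for Algorithm~\ref{alg:applicant-complete}, cite \cite{lev1981fast} for the $2$-regular residual, and observe that the promotion loop is constant-time because the sets $f^{-1}(p)$ are disjoint. The paper's own ``proof'' is just the sentence ``We summarize the preceding discussion in the following theorem,'' so your write-up is in fact a slightly more explicit recapitulation of exactly that preceding discussion.
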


\subsection{Example of Popular Matchings}\label{app:example}
To illustrate Algorithm \ref{alg:applicant-complete}, we provide a detailed example. Figure \ref{fig:pop} shows the preference lists for a popular matching instance $I$.  The set of $f$-posts is $\{p_1, p_4, p_5, p_7\}$ and the set of $s$-posts is $\{p_2, p_3, p_6, p_8, p_9\}$. 

Figure \ref{fig:reduced-pop} shows the reduced preference lists of $I$ and reduced graph $G'$.
\begin{figure}[ht]
\centering
    \begin{subfigure}[b]{0.45\textwidth}
        \centering
        $\begin{aligned}
            a_1: &\sh \underline{p_1}\sh p_2\\
            a_2: &\sh p_4\sh \underline{p_2}\\
            a_3: &\sh \underline{p_4}\sh p_3\\
            a_4: &\sh p_1\sh \underline{p_3}\\
            a_5: &\sh \underline{p_5}\sh p_2\\
            a_6: &\sh \underline{p_7}\sh p_6\\
            a_7: &\sh p_7\sh \underline{p_8}\\
            a_8: &\sh p_7\sh \underline{p_9}
        \end{aligned}$
        \caption{The reduced preference lists of $I$ with popular matching $M$ denoted by underlining}
    \end{subfigure}
    ~
    \begin{subfigure}[b]{0.45\textwidth}
        \centering
        \begin{tikzpicture}
        \draw[fill=black] (0,0.5) circle (2pt);
        \draw[fill=black] (0,1) circle (2pt);
        \draw[fill=black] (0,1.5) circle (2pt);
        \draw[fill=black] (0,2) circle (2pt);
        \draw[fill=black] (0,2.5) circle (2pt);
        \draw[fill=black] (0,3) circle (2pt);
        \draw[fill=black] (0,3.5) circle (2pt);
        \draw[fill=black] (0,4) circle (2pt);
        \draw[fill=black] (1.5,0) circle (2pt);
        \draw[fill=black] (1.5,0.5) circle (2pt);
        \draw[fill=black] (1.5,1) circle (2pt);
        \draw[fill=black] (1.5,1.5) circle (2pt);
        \draw[fill=black] (1.5,2) circle (2pt);
        \draw[fill=black] (1.5,2.5) circle (2pt);
        \draw[fill=black] (1.5,3) circle (2pt);
        \draw[fill=black] (1.5,3.5) circle (2pt);
        \draw[fill=black] (1.5,4) circle (2pt);
        \node at (-0.5,0.5) {$a_8$};
        \node at (-0.5,1) {$a_7$};
        \node at (-0.5,1.5) {$a_6$};
        \node at (-0.5,2) {$a_5$};
        \node at (-0.5,2.5) {$a_4$};
        \node at (-0.5,3) {$a_3$};
        \node at (-0.5,3.5) {$a_2$};
        \node at (-0.5,4) {$a_1$};
        \node at (2,0) {$p_9$};
        \node at (2,0.5) {$p_8$};
        \node at (2,1) {$p_7$};
        \node at (2,1.5) {$p_6$};
        \node at (2,2) {$p_5$};
        \node at (2,2.5) {$p_4$};
        \node at (2,3) {$p_3$};
        \node at (2,3.5) {$p_2$};
        \node at (2,4) {$p_1$};
        \draw[thick] (1.5,0) -- (0,0.5) -- (1.5,1) -- (0,1.5) -- (1.5,1.5) (1.5,0.5) -- (0,1) -- (1.5,1) (1.5,2) -- (0,2) -- (1.5,3.5) -- (0,3.5) -- (1.5,2.5) -- (0,3) -- (1.5,3) -- (0,2.5) -- (1.5,4) -- (0,4) -- (1.5,3.5);
        \end{tikzpicture}
        \caption{The reduced graph $G'$ of $G$}
    \end{subfigure}

    \caption{The reduced instance of $I$}
    \label{fig:reduced-pop}
\end{figure}

In the while loop of Algorithm \ref{alg:applicant-complete}, pairs $(a_8,p_9),(a_6,p_6)$ $,(a_7,p_8),(a_5,p_5)$ are matched. Figure \ref{fig:cycles} shows the reduced graph after the while loop of Algorithm \ref{alg:applicant-complete}. The graph consists of only even-length cycles. Choose one perfect matching in the reduced graph such as pairs $(a_1,p_1),(a_2,p_2),(a_3,p_4),(a_4,p_3)$, we obtain an applicant-complete matching. Note that one $f$-post $p_7$ is not matched in this applicant-complete matching. So we can promote any applicant from $\{a_6,a_7,a_8\}$ to match with $p_7$, e.g. $a_6$ is matched to $p_7$. The resulting popular matching $M$ is $\{(a_1,p_1),(a_2,p_2),(a_3,p_4),(a_4,p_3),(a_5,p_5),(a_6,p_7),(a_7,p_8)$\\$,(a_8,p_9)\}$.

\begin{figure}[h]
    \centering
        \begin{tikzpicture}
        \draw[fill=black] (0,0) circle (2pt);
        \draw[fill=black] (0,0.5) circle (2pt);
        \draw[fill=black] (0,1) circle (2pt);
        \draw[fill=black] (0,1.5) circle (2pt);
        \draw[fill=black] (1.5,0) circle (2pt);
        \draw[fill=black] (1.5,0.5) circle (2pt);
        \draw[fill=black] (1.5,1) circle (2pt);
        \draw[fill=black] (1.5,1.5) circle (2pt);
        \node at (-0.5,0) {$a_4$};
        \node at (-0.5,0.5) {$a_3$};
        \node at (-0.5,1) {$a_2$};
        \node at (-0.5,1.5) {$a_1$};
        \node at (2,0) {$p_4$};
        \node at (2,0.5) {$p_3$};
        \node at (2,1) {$p_2$};
        \node at (2,1.5) {$p_1$};
        \draw[thick] (1.5,1) -- (0,1) -- (1.5,0) -- (0,0.5) -- (1.5,0.5) -- (0,0) -- (1.5,1.5) -- (0,1.5) -- (1.5,1);
        \end{tikzpicture}
    \caption{The reduced graph after the while loop of Algorithm \ref{alg:applicant-complete}}
    \label{fig:cycles}
\end{figure}
\section{Finding Maximum-Cardinality Popular Matching in NC} \label{sec:max-pop}
We now consider the maximum-cardinality popular matching problem. Let $\mathcal{A}_1$ be the set of all applicants $a$ with $s(a) = l(a)$, and let $\mathcal{A}_2 = \mathcal{A} - \mathcal{A}_1$. Our target matching must satisfy conditions (i) and (ii) of Theorem \ref{thm:popular-matching}, and among all such matchings, allocate the fewest $\mathcal{A}_1$-applicants to their last resort. 
To be able to find maximum-cardinality matching in NC, we need another characterization of popular matching problem called {\em switching graph} \cite{mcdermid2011popular}, a directed graph which captures all the possible ways in which applicants may form different popular matchings by switching between the two posts on their reduced preference lists. 

Given a popular matching $M$ for an instance $G = (\mathcal{A} \cup \mathcal{P}, E)$, the switching graph $G_M$ of $M$ is a directed graph with a vertex for each post $p$, and a directed edge $(p_i, p_j)$ for each applicant $a$, where $p_i = M(a)$ and $p_j = O_M(a)$. Then each edge is labelled with the applicant that it represents. A {\em component} of $G_M$ is any maximal weakly connected subgraph of $G_M$. An applicant (resp. post) is said to be in a component, or path, or cycle of $G_M$ if the edge (resp. vertex) representing it is in that component, path or cycle. The following lemma in \cite{mcdermid2011popular} gives some simple properties of switching graphs.

\begin{lemma}[\cite{mcdermid2011popular}, Lemma 1] \label{lem:switching-graph}
Let $M$ be a popular matching for an instance of $G = (\mathcal{A} \cup \mathcal{P}, E)$, $G_M$ be the switching graph of $M$. Then
\begin{enumerate}[(i)]
    \item Each vertex in $G_M$ has outdegree at most $1$.
    \item The sink vertices of $G_M$ are those vertices corresponding to posts that are unmatched in $M$, and are all $s$-post vertices.
    \item Each component of $G_M$ contains either a single sink vertex or a single cycle.
\end{enumerate}
\end{lemma}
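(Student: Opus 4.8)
The plan is to analyze the structure of $G_M$ purely from the definition: each applicant $a$ contributes one directed edge $(M(a), O_M(a))$, labelled $a$. First I would establish (i): the outdegree of a post-vertex $p$ counts the applicants $a$ with $M(a) = p$, and since $M$ is a matching, at most one applicant is matched to $p$; hence the outdegree is at most $1$. A vertex has outdegree $0$ exactly when $p$ is unmatched in $M$.

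Next I would prove (ii). A sink is a vertex of outdegree $0$, which by the above is precisely a post unmatched in $M$. It remains to show every such post is an $s$-post. By Theorem \ref{thm:popular-matching}(i), every $f$-post is matched in $M$, so an unmatched post is not an $f$-post; since the reduced graph $G'$ contains only $f$-posts and $s$-posts, and sinks are (by construction of $G_M$ over the reduced instance) among these, every sink is an $s$-post. I should double-check here whether $G_M$ is built on $G$ or on the reduced instance $G'$; the cleanest route is to observe that $O_M(a) \in \{f(a), s(a)\}$ always lies in the reduced preference list, so every vertex appearing in $G_M$ as a head or tail of some edge is an $f$-post or $s$-post, and the argument goes through.

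For (iii), consider a weakly connected component $C$ with $c$ vertices. Every vertex of $C$ has outdegree at most $1$, so $C$ has at most $c$ edges. If $C$ has exactly $c$ edges, every vertex has outdegree exactly $1$, so following outgoing edges from any vertex must eventually repeat a vertex, producing a directed cycle; uniqueness of the cycle follows because if there were two distinct cycles in $C$, or a cycle plus extra structure forcing a second cycle, some vertex would need outdegree $\geq 2$ — more carefully, a connected functional graph (outdegree exactly $1$ everywhere) on a finite vertex set contains exactly one cycle, the rest being trees hanging into it. If $C$ has $c-1$ edges, then exactly one vertex has outdegree $0$ (a sink) and the rest have outdegree $1$; such a graph is a connected functional graph with one sink, which is a tree oriented toward that single sink, hence no cycle and exactly one sink. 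The case of $c - 2$ or fewer edges cannot give a connected graph on $c$ vertices, so these two cases are exhaustive.

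The main obstacle I anticipate is the bookkeeping in (iii): cleanly arguing that a finite weakly connected digraph in which every vertex has outdegree at most $1$ has either exactly one sink and no cycle, or no sink and exactly one cycle. The key observation that makes this routine is an edge count — weak connectivity forces at least $c-1$ edges, outdegree at most $1$ forces at most $c$ edges — combined with the standard fact that a functional graph (outdegree exactly $1$) restricted to a connected component has a unique cycle. Parts (i) and (ii) are immediate from the definition and from Theorem \ref{thm:popular-matching}.
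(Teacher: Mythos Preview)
The paper does not prove this lemma; it is quoted verbatim from \cite{mcdermid2011popular} (their Lemma~1) and used as a black box, so there is no in-paper proof to compare against. Your argument is correct and is the natural one: (i) follows because the tail of each edge is $M(a)$ and $M$ is a matching; (ii) follows from Theorem~\ref{thm:popular-matching}(i) once one observes (as you do, and as Algorithm~\ref{alg:max-cardinality} makes explicit by writing ``$G_M :=$ switching graph of $M$ and $G'$'') that the switching graph lives over the reduced instance, whose posts are exactly the $f$- and $s$-posts; and (iii) is the standard pseudoforest dichotomy. For (iii) your edge count is the right idea, but the cleanest way to finish is to note that a weakly connected component on $c$ vertices with $c-1$ edges is an undirected tree (hence acyclic with exactly one outdegree-$0$ vertex), while with $c$ edges the underlying undirected graph is unicyclic and, since every vertex has outdegree exactly $1$, that single undirected cycle must in fact be consistently oriented---otherwise some cycle vertex would have both incident cycle edges incoming and would need a third edge to realise its outgoing edge, forcing a second undirected cycle by another counting argument. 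This is precisely the ``directed pseudoforest'' structure the paper later exploits in Section~\ref{sec:pseudoforest}.
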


A component of a switching graph $G_M$ is called a {\em cycle component} if it contains a cycle, and a {\em tree component} if it contains a sink vertex. Each cycle in $G_M$ is called a {\em switching cycle}. If $T$ is a tree component of $G_M$ with sink vertex $p$, and if $q$ is another $s$-post vertex in $T$, the unique path from $q$ to $p$ is called a {\em switching path}. Note that each cycle component of $G_M$ has a unique switching cycle, but each tree component may have zero or multiple switching paths; to be precise it has one switching path for each $s$-post vertex it contains, other than the sink vertex. 

Figure \ref{fig:switching-graph} shows the switching graph $G_M$ for popular matching $M$. There are one switching cycle and two switching paths starting from $p_8$ and $p_9$ respectfully.

\begin{figure}[h]
    \centering
    \begin{tikzpicture}[> = stealth, auto, node distance=3cm, ultra thick,
   node_style/.style={circle,draw=black,font=\sffamily\Large\bfseries},
   edge_style/.style={draw=black, ultra thick}]
    \node[node_style] (p1) at (-4,2) {$p_1$};
    \node[node_style] (p2) at (-2,2) {$p_2$};
    \node[node_style] (p3) at (-4,0) {$p_3$};
    \node[node_style] (p4) at (-2,0) {$p_4$};
    \node[node_style] (p5) at (0,3) {$p_5$};
    \node[node_style] (p6) at (0,0) {$p_6$};
    \node[node_style] (p7) at (1,2) {$p_7$};
    \node[node_style] (p8) at (3,2) {$p_8$};
    \node[node_style] (p9) at (3,0) {$p_9$};
    \path[->] (p1) edge node {$a_1$} (p2);
    \path[->] (p2) edge node {$a_2$} (p4);
    \path[->] (p4) edge node {$a_3$} (p3);
    \path[->] (p3) edge node {$a_4$} (p1);
    \path[->] (p5) edge node {$a_5$} (p2);
    \path[->] (p7) edge node {$a_6$} (p6);
    \path[->] (p8) edge node {$a_7$} (p7);
    \path[->] (p9) edge node {$a_8$} (p7);
    \end{tikzpicture}
    \caption{The switching graph $G_M$ for popular matching $M$}
    \label{fig:switching-graph}
\end{figure}

Lemma \ref{lem:switching-graph} shows that the switching graph $G_M$ is indeed a directed pseudoforest. Next we give several NC algorithms for finding all switching cycles and switching paths in $G_M$. 

\subsection{Finding Cycles in Pseudoforest in NC} \label{sec:pseudoforest}
\begin{definition}
A {\bf pseudoforest} is an undirected graph in which every connected component has at most one cycle. A {\bf pseudotree} is a connected pseudoforest. A {\bf directed pseudoforest} is a directed graph in which each vertex has at most one outgoing edge, i.e., it has outdegree at most one. A {\bf directed 1-forest} (most commonly called a functional graph, sometimes maximal directed pseudoforest) is a directed graph in which each vertex has outdegree exactly one.
\end{definition}

It is easy to see that every weakly connected component in a directed pseudoforest contains either a single sink vertex or a single cycle. 

We consider the problem of finding switching cycles in $G_M$, later we will show that finding switching paths is as easy as finding switching cycles.

Given a directed pseudoforest $G_P$, we want to find each unique cycle $C$ in each component of $G_P$. There could not be any cycle in a component of $G_M$ if it is a tree component. The first approach is based on transitive closure $G_P^*$ of $G_P$ since computing the transitive closure is in NC by Theorem \ref{thm:transitive-closure}. We compute the transitive closure $G_P^*$ and for any two vertices $i$ and $j$ s.t. $i \neq j$ in $G_P$, if $G_P^*(i,j) = 1$ and $G_P^*(j,i) = 1$, then both $i$ and $j$ are in the unique cycle $C$. Hence we can identify the cycle $C$ by checking each pair of vertices in parallel. 

\begin{theorem}[\cite{jaja1992introduction}] \label{thm:transitive-closure}
The transitive closure of a directed graph with $n$ vertices can be computed in $O(\log^2 n)$ time, using $O(M(n)\log n)$ operations on a CREW PRAM, where $M(n)$ is the best known sequential bound for multiplying two $n \times n$ matrices over a ring.
\end{theorem}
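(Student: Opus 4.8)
The plan is to reduce the problem to a logarithmic number of matrix multiplications by \emph{repeated squaring}, exploiting the fact that reachability is witnessed by short walks. Work over the Boolean semiring $(\{0,1\},\vee,\wedge)$. Let $A$ be the adjacency matrix of the input digraph and set $B = I \vee A$, i.e.\ $B$ is $A$ with all self-loops added. By induction on $k$, the Boolean power $B^{k}$ has a $1$ in entry $(i,j)$ exactly when there is a walk of length at most $k$ from $i$ to $j$. Since a shortest such walk, if one exists, uses at most $n-1$ edges, the sequence $B, B^{2}, B^{3}, \dots$ stabilizes: for every $k \ge n-1$ we have $B^{k} = B^{\,n-1}$, and this common matrix is precisely the reflexive-transitive closure $A^{*}$ (the ordinary transitive closure is obtained by deleting the diagonal, a single parallel step).

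So the algorithm computes $B_{0} := B$ and $B_{t+1} := B_{t} \cdot B_{t}$ (Boolean matrix product) for $t = 0, 1, \dots, \ell-1$, where $\ell = \lceil \log_{2} n \rceil$; then $B_{\ell} = B^{2^{\ell}}$, and since $2^{\ell} \ge n > n-1$ this equals $A^{*}$. It remains to bound the cost of one Boolean product and multiply through by $\ell = O(\log n)$. A Boolean product of two $n \times n$ $0$--$1$ matrices is obtained from the corresponding product over $\mathbb{Z}$: compute $C = B_{t} B_{t}$ over the integers (each entry of the Boolean answer corresponds to a walk count of at most $n$, so all relevant intermediate values stay polynomially bounded), then in one parallel step replace every positive entry by $1$. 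The integer product is carried out by the best bilinear algorithm for ring matrix multiplication, viewed as a straight-line arithmetic program of size $O(M(n))$ organized as $O(\log n)$ levels of Strassen-type recursion, each level performing only $O(1)$-depth parallel additions of submatrices; hence one product runs in $O(\log n)$ time with $O(M(n))$ work on a CREW PRAM (concurrent reads arise because each matrix entry feeds many inner products; no concurrent writes are needed). Stacking $\ell$ squarings gives total time $O(\log^{2} n)$ and total work $O(M(n)\log n)$, with forming $I \vee A$ and the final thresholding absorbed into lower-order terms.

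The main obstacle is not the high-level recurrence — repeated squaring is immediate once one observes that reachability stabilizes after $n-1$ steps — but the claim that a \emph{single} matrix multiplication can be performed with only $O(\log n)$ depth while still using $O(M(n))$ rather than $O(n^{3})$ work. This requires exhibiting the fast matrix-multiplication algorithm as a shallow arithmetic circuit: a Strassen-type recursion bottoms out after $O(\log n)$ levels, and each recombination layer is a constant-depth block of independent additions, so the whole circuit has depth $O(\log n)$ and size $M(n)$. A secondary point to get right is the Boolean-versus-ring mismatch, handled cleanly by computing over $\mathbb{Z}$ and thresholding at the end, which is valid precisely because every walk count is at most $n$ before thresholding.
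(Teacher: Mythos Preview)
The paper does not prove this theorem; it is quoted from J\'aJ\'a's textbook and used as a black box. Your argument is exactly the standard one found there: repeated squaring of $B=I\vee A$ over $\lceil\log n\rceil$ rounds, with each Boolean product simulated by an integer product (thresholded back to $\{0,1\}$) carried out by a Strassen-style recursion of depth $O(\log n)$ and work $O(M(n))$. So your proposal is correct and matches the cited source's approach.

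One small wording point worth tightening: your parenthetical ``each entry of the Boolean answer corresponds to a walk count of at most $n$'' is only literally true at the first squaring; after that the entries of $B_t\cdot B_t$ over $\mathbb{Z}$ are not walk counts (you have already thresholded), but they are still bounded by $n$ simply because each factor is $\{0,1\}$-valued and the inner sum has $n$ terms. The bound and the argument are fine --- it is just the ``walk count'' justification that does not survive past the first round. It is also essential that the thresholding happen after \emph{each} squaring, as you describe; doing it only once at the end would let the integer entries blow up to $n^{\Theta(n)}$.
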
                        

We also give NC algorithms in the setting of undirected graph in which transitive closure does not help. Given an undirected pseudoforest $G_P$, denote the incidence matrix of $G_P$ as $I_{G_P}$. Let $cc(G)$ be the number of connected components in $G$. The basic idea is that we remove any one edge $e$ from $G_P$, if $e \in C$ s.t. $C$ is the unique cycle in $G_P$, then $cc(G_P - \{e\}) = cc(G_P)$; otherwise, $cc(G_P - \{e\}) = cc(G_P) + 1$. There is a direct connection between the rank of incidence matrix $I$ of $G$ and the number of connected component $cc(G)$ in $G$.

\begin{lemma}
If $G$ is an undirected graph with $k$ connected components, then the rank of its incidence matrix $I_G$ is $n-k$.
\end{lemma}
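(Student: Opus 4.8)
The plan is to prove the standard linear-algebra fact that, for an undirected graph $G$ on $n$ vertices with $k$ connected components, the incidence matrix $I_G$ has rank $n-k$. I will work over a field where this is cleanest; the usual choice is $\mathbb{R}$ with the \emph{oriented} incidence matrix (each edge column has a $+1$ and a $-1$), since over $\mathbb{F}_2$ the statement needs the extra hypothesis that $G$ is bipartite on each component and the paper's graphs are bipartite anyway, but I'd state and prove the real-oriented version for safety. The key observation is that the left null space of $I_G$ is exactly the space of vertex-labelings that are constant on each connected component.

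The steps, in order, are as follows. First I would reduce to the connected case: if $G$ has components $G_1,\dots,G_k$, then after permuting rows and columns $I_G$ is block-diagonal with blocks $I_{G_1},\dots,I_{G_k}$, so $\mathrm{rank}(I_G)=\sum_i \mathrm{rank}(I_{G_i})$ and $n=\sum_i n_i$; it therefore suffices to show a connected graph on $n_i$ vertices has incidence matrix of rank $n_i-1$. Second, for a connected graph $H$ I would show $\mathrm{rank}(I_H)\le n-1$ by exhibiting a nonzero left null vector: the all-ones row vector $\mathbf{1}$ satisfies $\mathbf{1}\,I_H = 0$, because each (oriented) edge column sums to $+1-1=0$. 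Third, I would show $\mathrm{rank}(I_H)\ge n-1$ by arguing the left null space is at most one-dimensional: suppose $x^\top I_H = 0$; then for every edge $e=(u,v)$ the corresponding column equation forces $x_u = x_v$, so $x$ is constant along edges, hence constant on all of $H$ by connectedness, so $x$ is a scalar multiple of $\mathbf{1}$. Combining, $\dim(\text{left null space}) = 1$, so $\mathrm{rank}(I_H) = n-1$. Finally I would assemble: $\mathrm{rank}(I_G) = \sum_i (n_i - 1) = n - k$.

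The only real subtlety — the main obstacle, if one can call it that — is the choice of incidence matrix and field. If one insists on the unoriented $0/1$ incidence matrix, the argument for the lower bound changes: over $\mathbb{R}$ a column equation from edge $(u,v)$ gives $x_u + x_v = 0$, so on a component the values alternate in sign, which is consistent iff the component is bipartite, and one must also check no \emph{odd} cycle collapses the null space further — in fact for a connected graph with an odd cycle the unoriented incidence matrix has rank $n$, not $n-1$. Since the paper leaves the convention implicit but only ever applies the lemma to bipartite (in fact pseudoforest) graphs, I would either adopt the oriented convention outright, for which the clean rank $=n-k$ holds with no bipartiteness caveat, or explicitly restrict to bipartite $G$; given how the lemma is used downstream (counting components after deleting an edge of a pseudoforest, where all relevant graphs are bipartite), either resolution is harmless, and I would flag the oriented convention in one sentence to keep the statement unconditional.
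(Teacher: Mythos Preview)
Your proof is correct and is the standard argument for this classical fact. The paper, however, does not prove this lemma at all: it is stated without proof as a well-known result from algebraic graph theory, used only to justify that counting connected components can be done in NC via a rank computation. So there is no ``paper's own proof'' to compare against.

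Your careful discussion of the oriented versus unoriented convention is well taken and, if anything, more scrupulous than the paper itself, which leaves the convention implicit. As you note, the lemma as stated (with no bipartiteness hypothesis) is only unconditionally true for the oriented incidence matrix over $\mathbb{R}$; the paper's downstream application is to bipartite pseudoforests, so the ambiguity is harmless in context. Adopting the oriented convention, as you propose, is the cleanest fix.
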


So we can compute the rank of $I_{G_P}$ and for each $e$ in $G_P$, compute the rank of $I_{G_P - \{e\}}$ in parallel. There are at most $|V|$ edges in $G_P$. 

\begin{theorem}[\cite{mulmuley1987fast}] \label{thm:rank-nc}
The rank of a $n \times n$ matrix over an arbitrary field can be computed in $O(\log^2 n)$ time, using a polynomial number of processors.
\end{theorem}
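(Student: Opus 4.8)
Since Theorem~\ref{thm:rank-nc} is the result of \cite{mulmuley1987fast}, I will only sketch the route I would take. The plan is to reduce the computation of $\mathrm{rank}(A)$ for $A \in F^{m\times n}$ to testing whether each of $O(n)$ explicitly-built polynomials is identically zero, and then to invoke a division-free parallel determinant algorithm. Three ingredients are needed. First, Berkowitz's (or Chistov's) algorithm computes the determinant of an $n\times n$ matrix over an \emph{arbitrary commutative ring} in $O(\log^2 n)$ parallel time with polynomially many processors, using only ring operations; in particular it works over $F$ and over polynomial rings $F[z,w]$ with polynomially bounded degrees (where one polynomial multiplication is an $O(\log n)$-depth convolution). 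Second, a purely algebraic characterization of rank that is valid over every field. Third --- and this is where the work lies --- a derandomization so that the zero-testing is done deterministically even when $F$ is a small finite field, where one cannot simply substitute a random point.

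For the characterization, I would introduce fresh indeterminates and, for each $k$, form $\tilde A_k = U_k A V_k$ where $U_k$ is $k\times m$ and $V_k$ is $n\times k$ with indeterminate entries. Cauchy--Binet gives $\det \tilde A_k = \sum_{|S|=|T|=k} \det(U_k[:,S])\,\det(A[S,T])\,\det(V_k[T,:])$. Every monomial of the generic minor $\det(U_k[:,S])$ involves exactly the column set $S$ (it is $\pm\prod_i u_{i,\sigma(i)}$ for a bijection $\sigma:[k]\to S$), and likewise for $V_k$; hence the summand belonging to a pair $(S_0,T_0)$ with $\det(A[S_0,T_0])\neq 0$ contributes monomials occurring in no other summand, and so cannot be cancelled. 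Consequently $\det \tilde A_k \not\equiv 0$ iff $A$ has a nonsingular $k\times k$ submatrix iff $\mathrm{rank}(A)\geq k$, so $\mathrm{rank}(A) = \max\{k : \det \tilde A_k \not\equiv 0\}$. Crucially this argument never uses the size of $F$.

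The derandomization step is the heart of the matter. Keeping $U_k,V_k$ fully generic is not an option: their determinant has exponentially many monomials, so it cannot be written down, and over a small field there is no random point to plug in. Instead I would replace the generic entries by powers of a single indeterminate (a Vandermonde/Toeplitz-type pattern), chosen so that (a) every minor $\det(U_k(z)[:,S])$ and $\det(V_k(w)[T,:])$ that the above argument relies on remains a nonzero polynomial, and (b) the degree of $\det(U_k(z)\,A\,V_k(w))$ stays polynomial in $n$; then this polynomial is computed explicitly over $F[z,w]$ by the division-free determinant routine, and one checks whether its coefficient array is zero. Running $k=1,\dots,\min(m,n)$ in parallel and returning the largest successful $k$ yields the rank, each phase taking $O(\log^2 n)$ parallel time with polynomially many processors.

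The main obstacle, therefore, is reconciling (a) and (b): showing that a fixed single-indeterminate substitution is simultaneously ``generic enough'' to preserve the equivalence $\det \tilde A_k \not\equiv 0 \Leftrightarrow \mathrm{rank}(A)\geq k$ and tame enough to keep all degrees polynomial, uniformly over every field --- in particular over $\mathbb{F}_2$, where the usual large-field and random-substitution shortcuts are unavailable. Everything else is Cauchy--Binet bookkeeping together with the black-box division-free determinant.
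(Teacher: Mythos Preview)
The paper does not prove Theorem~\ref{thm:rank-nc}; it is quoted verbatim from \cite{mulmuley1987fast} and used only as a black box in Section~\ref{sec:pseudoforest}. So there is no ``paper's own proof'' to compare against---only Mulmuley's original argument.

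Your outline is a genuinely different route from Mulmuley's. He does not sandwich $A$ between generic rectangular matrices and invoke Cauchy--Binet. Instead he passes to the symmetric matrix
\[
B=\begin{pmatrix}0 & A\\ A^{\mathsf T} & 0\end{pmatrix},
\]
so that $\mathrm{rank}(B)=2\,\mathrm{rank}(A)$, multiplies by $T=\mathrm{diag}(1,t,t^{2},\dots,t^{N-1})$ in a \emph{single} indeterminate, and computes $\det(xI-TB)\in F[t][x]$ by a division-free determinant routine. His key lemma is that for $B$ of this shape the eigenvalue $0$ of $TB$ over $F(t)$ has equal algebraic and geometric multiplicity, so the exact power of $x$ dividing $\det(xI-TB)$ is $N-\mathrm{rank}(B)$. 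That lemma is what makes one indeterminate suffice over every field, including $\mathbb{F}_2$; it replaces your unresolved derandomization step entirely, and the resulting bivariate polynomial has only $O(N^{3})$ coefficients.

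On your approach: the obstacle you flag is real and is not just bookkeeping. For the natural choice $u_{ij}=z^{(i-1)(j-1)}$ the minor polynomials $p_S(z)=\det\bigl(U_k(z)[:,S]\bigr)$ are each nonzero, but they are \emph{not} linearly independent over $F$ (already for $k=2$ one has $p_{\{1,2\}}+p_{\{2,3\}}=p_{\{1,3\}}$), so cancellation among the Cauchy--Binet terms can occur. Worse, any single-variable substitution that \emph{does} keep all $\binom{m}{k}$ minor polynomials $p_S(z)$ linearly independent must place them in a space of dimension $\binom{m}{k}$, forcing degrees of that order---exponential for $k\approx m/2$. So ``Vandermonde-type pattern with polynomial degree and all minors nonzero'' is not by itself enough to preserve the equivalence $\det\bigl(U_k(z)\,A\,V_k(w)\bigr)\not\equiv 0 \Leftrightarrow \mathrm{rank}(A)\ge k$; one needs a structural argument tailored to the compound matrix of $A$, and you have not supplied one. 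Mulmuley's symmetrization-plus-diagonal-twist is exactly the device that sidesteps this dimension obstruction.
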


We can also compute the number of connected component of $G_P$ directly by finding all connected components in $G_P$.

\begin{theorem}[\cite{cole1986approximate}] \label{thm:cc-nc}
The connected components of a graph with $n$ vertices and $m$ edges can be computed in $O(\log n)$ time, using $O((m+n)\alpha(m,n)/\log n)$ operations on an ARBITRARY CRCW PRAM, where $\alpha(m,n)$ is the inverse Ackermann function.
\end{theorem}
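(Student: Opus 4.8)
The statement is the classical parallel connectivity bound, so the plan is to realize it by the hook-and-shortcut (pointer-jumping) paradigm and then shave the spurious logarithmic factor off the work by accelerated cascading, which is where the inverse-Ackermann term enters.

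First I would maintain a parent array $D[\cdot]$ on the vertex set, with the invariant that the functional graph of $D$ is a forest of rooted stars, each star being a subset of a single connected component of $G$ (initially every vertex is its own root). Write $r(v)=D[v]$ for the current representative of $v$. The algorithm alternates two operations. \emph{Hooking}: in parallel, process all edges $(u,v)\in E$; whenever $r(u)\neq r(v)$, attempt to set $D[r(u)]:=r(v)$, breaking symmetry so that pointers always go toward smaller-indexed roots, so that no directed cycle is ever created and at least one root disappears whenever two distinct stars are joined by an edge; concurrent writes are resolved by the ARBITRARY CRCW convention, and a standard mix of conditional and unconditional hooking (as in Shiloach--Vishkin) guarantees that in every phase at least one incident cross-edge of every still-live root is acted upon. \emph{Shortcutting}: in parallel set $D[v]:=D[D[v]]$, re-flattening the trees toward stars.

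Second, I would bound the number of phases by $O(\log n)$. The standard Shiloach--Vishkin / Awerbuch--Shiloach analysis shows that in each hook-plus-shortcut phase either a connected component already forms a single star and is finished, or a constant fraction of the surviving roots inside it are eliminated; hence after $O(\log n)$ phases every component is a single star whose root labels it. Using one shortcut round per phase (the Awerbuch--Shiloach interleaving) keeps the total depth at $O(\log n)$, while each phase costs $O(m+n)$ work, so this plain version already gives $O(\log n)$ time with $O((m+n)\log n)$ operations; distributing this over $O((m+n))$ processors by Brent's theorem is immediate but not yet work-optimal.

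Third --- and this is the crux --- I would remove the extra $\log n$ factor from the operation count by accelerated cascading, following the scheduling technique of \cite{cole1986approximate}. After a few hooking rounds the number of live roots has shrunk geometrically; one then contracts $G$ onto the live roots so that the edge set of the contracted instance shrinks as well, and summing the resulting geometric series of per-level work yields a linear, rather than $O(\log n)$-factor, total. The inverse Ackermann factor $\alpha(m,n)$ appears precisely in the bookkeeping of this contraction: maintaining and relabeling vertex representatives across the $\Theta(\log n)$ contraction levels, while simultaneously keeping the depth $O(\log n)$ and the work linear, forces an iterated (union--find-like, $\log^{\ast}$-style recursive) data structure whose amortized per-operation cost is $\Theta(\alpha(m,n))$; Brent scheduling of the $\Theta(\log n)$ rounds over $O((m+n)\alpha(m,n)/\log^2 n)$ processors then gives the claimed $O(\log n)$ time with $O((m+n)\alpha(m,n)/\log n)$ operations. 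The main obstacle is exactly this last step: the hook-and-shortcut skeleton and its $O(\log n)$-round analysis are routine, but making the contraction simultaneously work-efficient and $O(\log n)$-depth is delicate, and it is there that the $\alpha(m,n)$ term is, with current techniques, unavoidable.
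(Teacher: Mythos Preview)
The paper does not prove this theorem at all: it is quoted verbatim as a known result from \cite{cole1986approximate} and used as a black box (to justify that finding weakly connected components of the switching graph is in NC). There is therefore no ``paper's own proof'' to compare your proposal against, and no proof was expected of you here.

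That said, your sketch is broadly the right shape for the Cole--Vishkin result --- Shiloach--Vishkin style hook-and-shortcut for the $O(\log n)$ depth, followed by the accelerated-cascading/approximate-scheduling machinery to bring the work down --- but two points deserve care. First, your final Brent arithmetic is off: $O((m+n)\alpha(m,n)/\log^2 n)$ processors for $O(\log n)$ time would yield $O((m+n)\alpha(m,n)/\log n)$ \emph{total} operations, which is sublinear and cannot be right for a problem that must read its input; the intended bound in the literature is $O((m+n)\alpha(m,n))$ operations (equivalently $O((m+n)\alpha(m,n)/\log n)$ \emph{processors}), and the statement as printed in the paper appears to conflate the two. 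Second, the $\alpha(m,n)$ in Cole--Vishkin does not come from a union--find data structure as you suggest, but from their recursive ``approximate parallel scheduling'' / deterministic coin-tossing technique that packs the per-round bookkeeping; attributing it to union--find is the wrong mechanism even if the asymptotic symbol matches.
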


For any tree component $T$, there might be zero or multiple switching paths. For each $s$-post $p$, we make a copy of $T$ and add one directed edge from the sink vertex to $p$ and then find the unique cycle in the new graph, which yields one switching path in $T$. 

\subsection{Algorithmic Results}
Now we are ready to give an NC algorithm to find a maximum-cardinality popular matching. 
\begin{algorithm}
{\bf Input:} Reduced graph $G' = (\mathcal{A} \cup \mathcal{P}, E')$ and a popular matching $M$.\\
{\bf Output:} A maximum-cardinality popular matching $M'$.\\
$G_M :=$ switching graph of $M$ and $G'$.\\
Find all weakly connected components of $G_M$;\\
{\bf for each} cycle component (resp. tree component) {\bf in parallel do}\\
\h Find the unique switching cycle (resp. each switching path);\\
{\bf for each} switching cycle (resp. switching path) {\bf in parallel do}\\
\h Compute the margin of applying this switching cycle(resp. switching path);\\
{\bf for each} cycle component (resp. tree component) {\bf in parallel do}\\
\h {\bf if} the margin $\Delta$ of switching cycle (resp. the largest margin of switching paths) is positive\\
\h\h Apply this switching cycle (resp. switching path) to $M$;\\
// The resulting matching $M'$ after applying such switching cycles and switching paths is the maximum-cardinality matching.\\
{\bf return} $M'$;\\
\caption{Maximum-Cardinality Popular Matching}
\label{alg:max-cardinality}
\end{algorithm}

Given the reduced graph $G'$ and a popular matching $M$, we construct the switching graph $G_M$. After that, we identify the unique switching cycle or each switching path in $G_M$. Then we increase the size of popular matching locally according to the margin $\Delta$ of each component.

For each switching cycle $C$ (resp. switching path $P$), we define the margin $\Delta$ in Definition \ref{def:margin} as the difference of the number of {\em last resort posts} after applying $C$ (resp. $P$) to $M$. For each applicant $a$, the margin $\Delta$ increases by $1$ if $a$ promotes from $l(a)$ to $f(a)$ or decreases by $1$ if $a$ demotes from $f(a)$ to $l(a)$, otherwise no change is made. The following theorem gives a one-to-one correspondence between a popular matching and a unique subset of the cycle components and the tree components of $G_M$, which is crucial to our algorithm for maximum-cardinality popular matching.

\begin{definition}\label{def:margin}
Let $\Delta$ be the margin of applying a switching cycle $C$ (resp. switching path $P$) to $M$, i.e.\\
\centerline{$\Delta = \sum_{a \in C (\text{resp.} P)} \mathbbm{1}_{M\cdot C(a)} - \mathbbm{1}_{M(a)}$}
where $\mathbbm{1}_p$ is an indicator function of posts \\
s.t. $\mathbbm{1}_p := \left\{ 
\begin{aligned}
1 & ~~ \text{if}~ p ~\text{is not}~ l \text{-post}  \\
0 & ~~ \text{if}~ p ~\text{is}~ l\text{-post}
\end{aligned}\right.
$
\end{definition}

The following theorem is crucial for the correctness of Algorithm \ref{alg:max-cardinality}.
\begin{theorem}[\cite{mcdermid2011popular}, Corollary 1] \label{thm:one-to-one}
Let $G = (\mathcal{A} \cup \mathcal{P}, E)$ be an instance, and let $M$ be an arbitrary popular matching for $G$ with switching graph $G_M$. Let the tree components of $G_M$ be $T_1, T_2, \cdots, T_k$, and the cycle components of $G_M$ be $C_1, C_2, \cdots, C_l$. Then the set of popular matchings for $G$ consists of exactly those matchings obtained by applying at most one switching path in $T_i$ for each $i (1\leq i \leq k)$ and by either applying or not applying the switching cycle in $C_i$ for each $i (1\leq i \leq l)$.
\end{theorem}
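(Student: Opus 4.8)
The plan is to establish the two inclusions hidden in ``consists of exactly,'' using Theorem~\ref{thm:popular-matching} to recognize popular matchings and Lemma~\ref{lem:switching-graph} for the shape of $G_M$; the resulting correspondence then turns out to be a bijection with an explicit inverse.

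\textbf{Soundness.} Fix any choice that selects a sub-collection of the switching cycles together with at most one switching path from each tree component, and let $M'$ be obtained from $M$ by re-matching, for every applicant $a$ lying on a selected cycle or path, $a$ from $M(a)$ to $O_M(a)$. Since the edge $M(a)\to O_M(a)$ of each applicant $a$ lies entirely inside one component of $G_M$, selections made in different components touch disjoint sets of posts and applicants, so it suffices to check one component in isolation. Applying a selected switching cycle $p_1\to p_2\to\cdots\to p_t\to p_1$ merely rotates the assignments along the cycle and leaves every post on it matched; applying a selected switching path $q_0\to q_1\to\cdots\to q_m$ (so $q_m$ is the sink of a tree component) moves the applicant off $q_0$, keeps every intermediate $q_i$ matched -- its unique outgoing edge is the path edge -- and matches the previously unmatched sink $q_m$. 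Hence $M'$ is a matching; moreover $M'(a)\in\{M(a),O_M(a)\}\subseteq\{f(a),s(a)\}$ for every $a$, and the only post that can lose its applicant is some $q_0$, which is an $s$-post, so every $f$-post remains matched. By Theorem~\ref{thm:popular-matching}, $M'$ is popular.

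\textbf{Completeness and bijectivity.} Conversely, let $M'$ be any popular matching. By Theorem~\ref{thm:popular-matching}, $M'(a)\in\{f(a),s(a)\}=\{M(a),O_M(a)\}$, so I set $S=\{a:M'(a)=O_M(a)\}$ and let $H$ be the subgraph of $G_M$ consisting of the edges of the applicants in $S$. Every vertex of $H$ has out-degree at most one by Lemma~\ref{lem:switching-graph}(i), and in-degree at most one, since two $H$-edges into a post $p$ would be two applicants of $S$ both matched to $p$ in $M'$; hence $H$ is a vertex-disjoint union of simple directed paths and simple directed cycles. The key observation is that if $(v,v')\in H$ and $v'$ has a $G_M$-outgoing edge, then that edge also lies in $H$, for otherwise its applicant would stay at $v'=M(\cdot)$ in $M'$ and collide with the applicant of $(v,v')$; propagating this forward shows that inside a cycle component $H$ is either empty or exactly the switching cycle, and that every cycle of $H$ is a switching cycle. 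For a maximal path $q_0\to\cdots\to q_m$ of $H$, the vertex $q_m$ cannot have a $G_M$-outgoing edge (that edge would be forced into $H$, contradicting maximality, or would keep its applicant at $q_m$, colliding with the $H$-edge into $q_m$), so $q_m$ is a sink and its component is a tree component; and $q_0$ is unmatched in $M'$ (its applicant left and no $H$-edge enters $q_0$), so condition~(i) of Theorem~\ref{thm:popular-matching} forces $q_0$ to be an $s$-post. Thus every maximal path of $H$ is a switching path, and two of them in one tree component would meet at a common vertex -- the component is oriented toward its unique sink -- producing in-degree $2$ in $H$. Therefore $S$, and hence $M'$, is realized by a legal selection; since $S$ and its decomposition into switching cycles and switching paths are recovered uniquely from $M'$, the selection-to-matching map is a bijection, giving the exact description claimed.

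\textbf{Main obstacle.} I expect the completeness direction to be the delicate part, and within it the two facts that (a) each maximal path of $H$ begins and ends at $s$-posts and (b) at most one such path lies in any tree component. Both hinge on reconciling ``$M'$ is a valid matching'' with ``$M'$ satisfies condition~(i) of Theorem~\ref{thm:popular-matching}'': the matching constraint controls where an applicant that vacates a post can be re-routed (forcing $H$-edges to propagate forward and path endpoints to be sinks/sources of $G_M$), while condition~(i) forbids an $f$-post from being the loose end of a path. The soundness direction and the bijectivity bookkeeping are routine once this structure is in hand.
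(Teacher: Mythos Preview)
The paper does not supply its own proof of this statement: Theorem~\ref{thm:one-to-one} is quoted verbatim as Corollary~1 of \cite{mcdermid2011popular} and used as a black box for the correctness of Algorithm~\ref{alg:max-cardinality}. So there is no in-paper argument to compare your proposal against.

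That said, your proposal is a correct self-contained proof. The soundness direction cleanly reduces to checking that applying a switching cycle or a single switching path preserves both conditions of Theorem~\ref{thm:popular-matching}; your observation that the only post which can become unmatched is the start $q_0$ of a switching path, which is an $s$-post by definition, is exactly what is needed. For completeness, encoding the difference set $S=\{a:M'(a)\neq M(a)\}$ as a subgraph $H$ of $G_M$ and using the matching property of $M'$ to force in-degree $\le 1$ and forward propagation of $H$-edges is the natural argument, and your case analysis (cycle components carry $H=\emptyset$ or $H=$ the switching cycle; maximal $H$-paths must end at the sink and, by condition~(i), must start at an $s$-post; two such paths in one tree would collide) is sound. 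The bijectivity remark is a bonus beyond what the theorem literally asserts, but it follows immediately from your construction since $S$ (hence $H$ and its unique decomposition) is recovered from $M'$.

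One small point worth stating explicitly for a fully polished write-up: in the cycle-component analysis you should spell out why no $H$-edge can lie off the cycle. Your forward-propagation principle, applied from any off-cycle $H$-edge, walks into the cycle and then around it, producing in-degree $2$ at the entry vertex; you allude to this but it deserves one sentence. Otherwise the argument is complete.
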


\subsection{Correctness}
 Any popular matching can be obtained from $M$ by applying at most one switching cycle or switching path per component of the switching graph $G_M$. For any tree component $T$, we apply the switching path in $T$ with the largest positive margin. Similarly, for any cycle component $C$, we apply the switching cycle in $C$ with positive margin. Then, we get the largest possible total margin, which in turn implies the largest possible number of $l$-posts we removed from $M$. Hence, we obtain the maximum-cardinality popular matching. For any other popular matching obtained by applying difference subset of switching paths or switching cycles, it will have strictly less total margin than the maximum-cardinality popular matching. 

\subsection{Complexity}
It is clear that the switching graph $G_M$ can be constructed from $G'$ and $M$ in constant time in parallel. All weakly connected components of $G_M$ can also be found in polylog time by Theorem \ref{thm:cc-nc}. Moreover, in Section \ref{sec:pseudoforest}, we showed that all switching cycles and switching paths can be found in polylog time. Each switching cycle and switching path can be applied to matching $M$ easily in parallel since they are vertex-disjoint in $G_M$. So, overall the complexity of Algorithm \ref{alg:max-cardinality} is $O(\log^2 n)$.

We summarize the preceding discussion in the following theorem.
\begin{theorem}
We can find a maximum-cardinality popular matching, or determine that no such matching exists in NC.
\end{theorem}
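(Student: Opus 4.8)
The plan is to establish correctness and NC-membership of Algorithm~\ref{alg:max-cardinality} by combining the structural result of Theorem~\ref{thm:one-to-one} with the parallel primitives developed in Section~\ref{sec:pseudoforest}. First I would argue correctness: by Lemma~\ref{lem:switching-graph}, $G_M$ is a directed pseudoforest, so its weakly connected components partition into tree components $T_1,\dots,T_k$ and cycle components $C_1,\dots,C_l$, and by Theorem~\ref{thm:one-to-one} every popular matching of $G$ arises by choosing, independently for each $i$, at most one switching path inside $T_i$ and either applying or not applying the switching cycle in $C_i$. Since the margin $\Delta$ of Definition~\ref{def:margin} is additive over disjoint components (the applicant sets of distinct switching cycles/paths are vertex-disjoint in $G_M$), the total change in the number of applicants matched to a real post equals the sum of the per-component margins. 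Hence maximizing cardinality is equivalent to independently maximizing each component's contribution: for a cycle component take the switching cycle if its margin is positive (and ignore it otherwise), and for a tree component take the switching path of largest margin if that maximum is positive (and otherwise leave the component untouched). This is exactly what lines~7--10 of the algorithm do, so the output $M'$ is a popular matching (still satisfying conditions (i)--(ii) of Theorem~\ref{thm:popular-matching}, since it is produced by Theorem~\ref{thm:one-to-one}) of maximum cardinality; and if no popular matching exists, Algorithm~\ref{alg:popular-matching} already reports this, so $M$ is available whenever the algorithm is invoked.

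Next I would verify the NC complexity bound, step by step along the algorithm. Constructing $G_M$ from $G'$ and $M$ is immediate in parallel: one vertex per post, one directed edge per applicant from $M(a)$ to $O_M(a)$, so $O(1)$ time with a linear number of processors. Finding all weakly connected components is in NC by Theorem~\ref{thm:cc-nc}, in $O(\log n)$ time. Finding the unique switching cycle in each cycle component, and each switching path in each tree component, is in NC by the discussion of Section~\ref{sec:pseudoforest}: either via transitive closure (Theorem~\ref{thm:transitive-closure}) or via incidence-matrix rank (Theorem~\ref{thm:rank-nc}), both $O(\log^2 n)$; in particular a tree component with sink $p$ yields one switching path per $s$-post vertex $q\neq p$, obtained by adding the edge from $p$ to $q$ and locating the resulting cycle, all copies handled in parallel. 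Computing the margin of a given switching cycle or path is a sum of $\pm 1$ indicator terms over its applicants, a parallel prefix-sum in $O(\log n)$; comparing margins within a tree component to select the largest is again $O(\log n)$. Finally, applying the chosen switching cycles and switching paths to $M$ is a disjoint local rewrite (each changes only the posts inside its own component), so it is done in $O(1)$ parallel time. Summing, the dominant term is $O(\log^2 n)$ with a polynomial number of processors, which is the claimed NC bound.

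The main obstacle is the correctness argument rather than the complexity accounting: one must be careful that the independence asserted by Theorem~\ref{thm:one-to-one} really licenses the greedy per-component choice, i.e.\ that the global maximum-cardinality matching is obtained by local maximization. This hinges on two points that I would make explicit. First, that $\Delta$ as defined genuinely equals the increase in the size of the matching (number of applicants not on their last resort): an applicant $a$ contributes $+1$ exactly when it moves from $l(a)=s(a)$ to $f(a)$ and $-1$ when it moves from $f(a)$ to $l(a)$, and these are the only applicants whose "real-post" status changes, so summing the indicator differences over the applicants on the applied cycles/paths gives precisely the net gain. Second, that the feasible region described by Theorem~\ref{thm:one-to-one} is a product over components, so the objective $\sum_i (\text{margin of the choice in component } i)$ is separable and the optimum is attained by optimizing each summand independently; for a tree component the options are "do nothing" (contribution $0$) or "apply switching path $P$" (contribution the margin of $P$), so the optimal local contribution is $\max\{0,\max_P \Delta(P)\}$, matching line~8; for a cycle component it is $\max\{0,\Delta(C)\}$. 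Once these two observations are in place the theorem follows, and the NC bound is then routine from the primitives cited above.
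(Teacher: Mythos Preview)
Your proposal is correct and follows essentially the same approach as the paper: correctness via the per-component independence from Theorem~\ref{thm:one-to-one} combined with greedy margin maximization, and the NC bound via the parallel primitives of Section~\ref{sec:pseudoforest} and Theorems~\ref{thm:cc-nc}, \ref{thm:transitive-closure}, \ref{thm:rank-nc}. If anything, you are more careful than the paper in making explicit that $\Delta$ really measures the change in cardinality and that the objective is separable over components.
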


\subsection{Optimal Popular Matchings}
It is natural to extend the popular matching problem to a weighted version of the popular matching problem. If a weight $w(a_i,p_j)$ is defined for each applicant-post pair with $p_j$ acceptable to $a_i$, then the weight $w(M)$ of a popular matching $M$ is $\sum_{(a_i,p_j)\in M} w(a_i, p_j)$. A popular matching is optimal if it is a maximum or minimum weight popular matching. It turns out that maximum-cardinality popular matching is a special case of maximum weight popular matching if we assign a weight of 0 to each pair involving a last resort post and a weight of 1 to all other pairs. 

Kavitha et al. \cite{kavitha2009optimal} considered other optimality criteria, in terms of the so called profile of the matching. For a popular matching instance with $n_1$ applicants and $n_2$ posts, we define the profile $\rho(M)$ of $M$ to be the $(n_2+1)$ tuple $(x_1, x_2, \cdots, x_{n_2+1})$ such that for each $i, 1\leq i \leq n_2+1$, $x_i$ is the number of applicants who are matched with their $i$th ranked post. An applicant who is matched to his/her last resort post is considered to be matched to his/her $(n_2+1)$th ranked post, regardless of the length of his/her preference list. 

Suppose that $\rho = (x_1, x_2, \cdots, x_{n_2+1})$ and $\rho' = (y_1, y_2, \cdots$ $ , y_{n_2+1})$. We use $\succ_R$ denote the lexicographic order on profiles: $\rho \succ_R \rho'$ if $x_i = y_i$ for $1 \leq i < k$ and $x_k > y_k$, for some $k$. Similarly, we use $\prec_F$ to denote the lexicographic order on profiles: $\rho \prec_F \rho'$ if $x_i = y_i$ for $k < i \leq n_2+1$ and $x_k < y_k$, for some $k$.

A rank-maximal popular matching is a popular matching whose profile is maximal with respect to $\succ_R$. A fair popular matching is a popular matching whose profile is minimal with respect to $\prec_F$. Note that a fair popular matching is always a maximum-cardinality popular matching since the number of last resort posts is minimized. It is easy to check these two problems are equivalent to the optimal popular matching problem with suitable weight assignments as follows.
\begin{itemize}
    \item Rank-maximal popular matching: assign a weight of 0 to each pair involving a last resort post and a weight of $n_1^{n_2-k+1}$ to each pair $(a_i, p_j)$ where $p_j$ is $k$th ranked post of $a_i$, and find a maximum weight popular matching.
    
    \item Fair popular matching: assign a weight of $n_1^{k}$ to each pair $(a_i, p_j)$ where $p_j$ is the $k$th ranked post of $a_i$, and find a minimum weight popular matching.
\end{itemize}

Now we are ready to give an NC algorithm for the optimal popular matching problem. Given a popular matching instance and a particular weight assignment, let $M$ be a popular matching, and $M_{opt}$ be an optimal popular matching (maximum or minimum weight, depends on the context). By Theorem \ref{thm:one-to-one}, $M_{opt}$ can be obtained from $M$ by applying a choice of at most one switching cycle or switching path per component of the switching graph $G_M$. Similar to Algorithm \ref{alg:max-cardinality}, the algorithm for computing $M_{opt}$ will compute an arbitrary popular matching $M$, and make an appropriate choice of switching cycles and switching paths to apply in order to obtain an optimal popular matching. The only difference is the margin calculation. In order to decide to apply a switching cycle $C$ or not, we need to compare $\sum_{a\in C} w(a, M(a))$ with $\sum_{a \in C} w(a, M\cdot C(a))$. In the case of maximum-cardinality popular matching, the weight assignment is either 0 or 1. While in rank-maximal popular matching and fair popular matching, $w$ is bounded by $n_1^{n_2+1}$, which has $\tilde{O}(n)$ bits. So $\sum_{a\in C} w(a, M(a))$ can be computed in NC.

\section{Preference Lists with Ties}
In this section, we consider the popular matching problem such that preference lists are not strictly ordered, but contain ties. Without the assumption of strictly ordered preference lists, we show that the popular matching problem is at least as hard as the maximum-cardinality bipartite matching problem by showing that maximum-cardinality bipartite matching is NC-reducible to popular matching. Note that whether bipartite perfect matching is in NC is still open \cite{fenner2016bipartite}. 

Now we show the following NC reduction.
\begin{theorem} \label{thm:nc-reduction}
Maximum-cardinality Bipartite Matching $\leq_{NC}$ Popular Matching.
\end{theorem}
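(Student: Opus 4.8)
The plan is to reduce maximum-cardinality bipartite matching to popular matching in NC. Given a bipartite graph $H = (X \cup Y, F)$ in which we want a maximum-cardinality matching, I would build a popular matching instance as follows: let the applicants be $\mathcal{A} = X$ and the posts be $\mathcal{P} = Y$ together with the last resort posts $l(x)$ for each $x \in X$. For each applicant $x \in X$, put \emph{all} of its neighbours in $H$ into a single tie at the top (rank $1$) of $x$'s preference list, and place $l(x)$ second. This construction is clearly computable in NC (indeed in constant parallel time). The point of the all-top-tie gadget is that it forces the ``first-choice'' structure to coincide exactly with the original adjacency, so that popularity becomes a cardinality condition.

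Next I would invoke the characterization of popular matchings for preference lists with ties from Abraham et al.~\cite{abraham2007popular} (the analogue of Theorem~\ref{thm:popular-matching}): with ties, the relevant notions of $f$-post and $s$-post are replaced by a more refined labelling, but in our gadget every real post $y$ is a rank-$1$ post for each of its $H$-neighbours, and each applicant's only alternative is its last resort. The key claim is that a matching $M$ in this instance is popular if and only if $M$ restricted to the real posts is a \emph{maximum-cardinality} matching of $H$; equivalently, if and only if no applicant matched to its last resort can be rerouted, even via an alternating chain, to a real post without displacing someone — i.e.\ there is no augmenting path. I would prove this by the standard exchange argument: if $M \cap F$ is not maximum in $H$, an augmenting path $P$ in $H$ gives a matching $M' = M \triangle P$ preferred by at least one applicant (the newly matched endpoint) and by no fewer, and strictly more applicants prefer $M'$, so $M' \succ M$; conversely if $M \cap F$ is maximum, then for any $M'$ the symmetric difference $M \triangle M'$ on the real posts consists of paths and cycles, and a counting argument over these components shows $|P(M,M')| \ge |P(M',M)|$, so $M$ is popular. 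Since the gadget has only rank-$1$ real edges and rank-$2$ last-resort edges, no applicant can ever strictly improve beyond ``unmatched $\to$ matched to a neighbour,'' which is exactly an augmentation step, and this is what makes the two counts line up.

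Finally I would assemble the reduction: call the (hypothetical) NC algorithm for popular matching on the constructed instance; it returns a popular matching $M$ (one always exists here — e.g.\ match everyone to their last resort, which is popular iff $H$ has no edges, but in general an applicant-complete popular matching exists by the characterization, so the ``no popular matching'' branch never fires), and by the claim $M \cap F$ is a maximum-cardinality matching of $H$, which we output. The postprocessing (stripping off last-resort edges) is trivially in NC, so the composition is an NC reduction, proving $\text{Maximum-cardinality Bipartite Matching} \le_{NC} \text{Popular Matching}$.

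The main obstacle I anticipate is the ``conversely'' direction of the key claim: showing that when $M \cap F$ is a maximum matching, $M$ is genuinely popular against \emph{all} rival matchings $M'$, not just those differing by a single path. This requires a careful component-by-component accounting of $P(M,M')$ versus $P(M',M)$ over the alternating paths and cycles of $M \triangle M'$ — noting that an applicant contributes to $P(M',M)$ only if it goes from unmatched to matched or is the turning point of a path, that maximality of $M \cap F$ forbids paths that augment $M$, and hence every path or cycle contributes at least as many votes for $M$ as against it. One has to be slightly careful about applicants that swap one real post for another (these are indifferent, since all real posts are tied at rank $1$, so they contribute to neither set), which actually simplifies the count. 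Handling the edge cases around isolated vertices of $H$ and applicants with empty neighbourhoods is routine.
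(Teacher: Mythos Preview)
Your proposal is correct and follows essentially the same route as the paper: make every real edge rank~$1$, so that an applicant's only strict preference is ``matched to a real post'' over ``not,'' which forces $|P(M',M)|-|P(M,M')|=|M'\cap F|-|M\cap F|$ and hence popular $\Leftrightarrow$ maximum. The paper's version is marginally cleaner in that it explicitly \emph{omits} last resort posts (``We do not add last resort posts at all''), so the identity $|P(M',M)|-|P(M,M')|=|M'|-|M|$ holds on the nose and both directions are one line each; your addition of last resorts is harmless but costs you the extra step of stripping them off and of arguing existence (which, incidentally, you should justify via your own reverse direction---any maximum matching is popular---rather than by invoking ``the characterization''). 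Also drop the ``turning point of a path'' remark in your obstacle paragraph: in this gadget no applicant who swaps one real post for another has any preference, so the component-by-component accounting collapses to the single cardinality count you already identified.
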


\begin{proof}
Suppose we have access to a black box that can solve Popular Matching in NC. Consider an arbitrary instance of Maximum-cardinality Bipartite Matching, specified by a graph $G = (\mathcal{A}\cup\mathcal{B}, E)$. We construct our Popular Matching instance by giving all edges rank 1, i.e. each applicant has the same preference over all acceptable posts. For convenience, we also use $G = (\mathcal{A}\cup\mathcal{B}, E)$ as our instance of Popular Matching. We do not add last resort posts at all. Lemma \ref{lem:right-reduction} and Lemma \ref{lem:left-reduction} show that popular matching always exists in $G$ and any popular matching $M$ is also a maximum-cardinality matching in $G$.
\end{proof}

\begin{lemma} \label{lem:right-reduction}
Let $M$ be a popular matching in $G$. Then $M$ is also a maximum-cardinality matching in $G$.
\end{lemma}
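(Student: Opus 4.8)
The plan is to argue by contradiction: suppose $M$ is popular in $G$ but not a maximum-cardinality matching. Since in this instance every edge has rank $1$ and there are no last resort posts, an applicant $a$ "prefers $M'$ to $M$" exactly when $a$ is matched in $M'$ and unmatched in $M$ (condition (i) of the definition of preferring a matching); the tie-everywhere structure means condition (ii) can never contribute, because an applicant matched in both $M$ and $M'$ is indifferent between the two posts. So the only way one matching beats another in the popularity count is by matching strictly more applicants from among those left unmatched.

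First I would invoke the augmenting-path structure of bipartite matchings. If $M$ is not maximum, then by Berge's theorem there is an $M$-augmenting path $Q = (a_0, b_1, a_1, b_2, \dots, a_k)$ with both endpoints unmatched in $M$ (note both endpoints are on the applicant side, or one on each side — I should be slightly careful here, but in $G=(\mathcal{A}\cup\mathcal{B},E)$ an augmenting path has its two endpoints on opposite sides; still, only applicants count toward popularity, so I want to track the applicant endpoint and, more usefully, I want a matching that covers strictly more applicants). Let $M' = M \oplus Q$ be the matching obtained by flipping the augmenting path. Then $M'$ matches every applicant that $M$ matches, plus exactly one additional applicant (the applicant-side endpoint of $Q$), and every applicant whose partner changed along $Q$ was matched in both $M$ and $M'$, hence indifferent. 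Therefore $|P(M', M)| = 1$ (just the newly matched applicant) while $|P(M, M')| = 0$ (no applicant is matched in $M$ and unmatched in $M'$). Hence $M' \succ M$, contradicting popularity of $M$.

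The only genuine obstacle is a bookkeeping point: making sure that flipping $Q$ does not \emph{un}match any applicant, so that $P(M,M') = \emptyset$. This is immediate from the definition of an augmenting path — every internal vertex of $Q$ is matched both before and after the flip, and the two endpoints are unmatched before and matched after — but I would state it explicitly, and I would handle the orientation of $Q$ so that the newly covered endpoint is the applicant (if both endpoints of the augmenting path in the bipartite graph are forced to be on opposite sides, then exactly one of them is an applicant, and that applicant becomes newly matched; the post-side endpoint becoming matched is irrelevant to the count). With that, $|P(M',M)| \ge 1 > 0 = |P(M,M')|$, so $M'$ is more popular than $M$, the desired contradiction. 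Hence every popular matching in $G$ is maximum-cardinality.
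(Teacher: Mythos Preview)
Your proposal is correct and follows essentially the same approach as the paper: argue by contradiction, take an $M$-augmenting path $Q$ via Berge's theorem, flip it to obtain $M'$, and use the fact that all edges have rank~$1$ so that the only applicant with a strict preference is the newly matched one, yielding $M' \succ M$. The paper's proof is terser and does not spell out the endpoint/bookkeeping details you include, but the argument is the same.
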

\begin{proof}
Suppose for a contradiction that $M$ is not a maximum matching of $G$. Then $M$ admits an augmenting path $Q$ with respect to $G$. Since each edge in $G$ has rank 1, after applying augmenting path $Q$ to $M$, we obtain a matching $M'$ that is {\em more popular than} $M$ because $M'$ has exactly one more edge matched than $M$ and all rest of applicants do not have preference over $M$ and $M'$.
\end{proof}

We know from Section \ref{sec:pop} that popular matching may not exist in an arbitrary popular matching instance. We show that given the construction that each edge in $G$ has rank 1, popular matching always exists.
\begin{lemma} \label{lem:left-reduction}
Let $M$ be a maximum-cardinality matching in $G$. Then $M$ is also a popular matching in $G$.
\end{lemma}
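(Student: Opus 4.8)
The plan is to argue by contradiction: suppose $M$ is a maximum-cardinality matching in $G$ but is not popular, so there is a matching $M'$ with $M' \succ M$, i.e. $|P(M',M)| > |P(M,M')|$. I want to derive a contradiction with maximality of $M$. First I would analyze the symmetric difference $M \oplus M'$, which (as usual for two matchings) decomposes into vertex-disjoint paths and even cycles alternating between $M$ and $M'$. The key observation is that since every edge of $G$ has rank $1$, an applicant $a$ strictly prefers $M'$ to $M$ only in case (i) of the definition, that is, $a$ is matched in $M'$ and unmatched in $M$; an applicant matched in both $M$ and $M'$ is indifferent (all acceptable posts are tied), hence contributes to neither $P(M',M)$ nor $P(M,M')$. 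Therefore $P(M',M)$ is exactly the set of applicants matched in $M'$ but not in $M$, and symmetrically $P(M,M')$ is exactly the set of applicants matched in $M$ but not in $M'$.

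Next I would count these applicants component by component in $M \oplus M'$. A cycle, or a path whose two endpoints are "balanced" (one in each matching, or one an applicant and one a post in a way that matches the same number of applicants), changes nothing in the applicant-match count. The only components that can make $|P(M',M)| \neq |P(M,M')|$ are alternating paths that are augmenting for one of the two matchings. Since $M' \succ M$ forces $|P(M',M)| > |P(M,M')|$, there must be strictly more components that are augmenting paths for $M$ (relative to $M'$) — i.e. paths along which $M'$ matches an extra applicant — than components augmenting for $M'$. In particular at least one such path $Q$ exists: an $M$-alternating path both of whose endpoints are unmatched by $M$, with the applicant endpoint(s) giving a net gain. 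This $Q$ is an augmenting path for $M$ in $G$, so flipping $M$ along $Q$ yields a matching of strictly larger cardinality, contradicting that $M$ is maximum-cardinality.

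The main obstacle I anticipate is being careful with the endpoint bookkeeping in the path/cycle decomposition, since paths can start or end at either an applicant or a post, and only unmatched \emph{applicants} at path endpoints contribute to the popularity count while unmatched posts do not; one has to check that a strict surplus $|P(M',M)| > |P(M,M')|$ genuinely forces the existence of a path that gains an applicant for $M'$ at \emph{both} ends being free in $M$, i.e. a true augmenting path for $M$. Once that combinatorial accounting is pinned down, the contradiction with maximum cardinality is immediate. A cleaner alternative phrasing: if $M$ is maximum cardinality, then for \emph{every} matching $M'$, the number of applicants matched in $M'$ but not $M$ is at most the number matched in $M$ but not $M'$ (otherwise $M \oplus M'$ would contain an $M$-augmenting path); combined with the rank-$1$ observation that these two quantities are exactly $|P(M',M)|$ and $|P(M,M')|$, we get $|P(M',M)| \le |P(M,M')|$ for all $M'$, which is precisely the statement that $M$ is popular.
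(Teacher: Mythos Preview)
Your proposal is correct and, in its ``cleaner alternative'' formulation, matches the paper's proof almost exactly. The paper's argument is just the one-line identity $|P(M',M)| - |P(M,M')| = |M'| - |M| \le 0$ (which is precisely your observation that, with all edges of rank~1, $P(M',M)$ and $P(M,M')$ are the applicants matched only in $M'$ and only in $M$, respectively); your initial contradiction argument via an explicit augmenting-path component analysis is a correct but unnecessary detour to the same conclusion.
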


\begin{proof}
Consider any other matching $M'$ in $G$. We only care about the symmetric difference $M\Delta M'$ since the rest of edges do not have preference over $M$ and $M'$. Since all edges have rank 1, then $|P(M', M)| - |P(M, M')| = |M'| - |M| \leq 0$. Hence, no matching is {\em more popular than} $M$.
\end{proof}

We conjecture that the following reduction is also true.
\begin{conjecture} \label{conj:nc-reduction-left}
Popular Matching $\leq_{NC}$ Maximum-cardinality Bipartite Matching.
\end{conjecture}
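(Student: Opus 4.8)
\smallskip
\noindent\textbf{Towards Conjecture \ref{conj:nc-reduction-left}.}
The plan is to implement, in NC, the polynomial-time algorithm of Abraham et al.\ \cite{abraham2007popular} for popular matchings with ties, replacing each maximum-matching computation by a call to the assumed NC oracle for maximum-cardinality bipartite matching. First I would build the first-choice graph $G_1=(\mathcal{A}\cup\mathcal{P},E_1)$ whose edges are the rank-one edges, and compute a maximum matching $M_1$ of $G_1$ with one oracle call. From $M_1$ I would compute the Dulmage--Mendelsohn (Edmonds--Gallai) labelling of $G_1$: orient matched edges from post to applicant and unmatched edges from applicant to post, and label each vertex \emph{even}, \emph{odd}, or \emph{unreachable} according to whether it is reached from an $M_1$-exposed vertex by a directed path of even length, of odd length, or not at all. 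This is directed reachability, hence in NC by transitive closure (Theorem \ref{thm:transitive-closure}), and it is independent of the choice of $M_1$. Using the labelling I would define the second-post sets $s(a)$ as in \cite{abraham2007popular} (the most-preferred posts of $a$ lying in the even/unreachable part of $\mathcal{P}$) and form $G'=(\mathcal{A}\cup\mathcal{P},\,E_1\cup E_s)$ with $E_s=\{(a,p):p\in s(a)\}$; this is a local, hence NC, step.

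By the characterization of \cite{abraham2007popular}, a popular matching exists if and only if $G'$ has an applicant-complete matching $M$ whose rank-one part $M\cap E_1$ is a maximum matching of $G_1$ --- equivalently, one saturating $\mathcal{A}$ together with every odd post of $G_1$ --- and any such $M$ is itself popular. Everything above is in NC relative to the oracle, so the whole reduction comes down to producing such an $M$, or reporting that none exists, in NC; and this is the step I expect to be the main obstacle. The oracle returns an \emph{unconstrained} maximum matching of $G'$, which in general neither restricts to a maximum matching of $G_1$ nor is applicant-complete, even when a popular matching exists; what is needed is a maximum matching of $G'$ subject to a side condition on its $E_1$-part, i.e.\ a matching saturating a prescribed set of vertices on both sides.

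I see two routes, both blocked at the same point. One is a weighted reformulation: give each $E_1$-edge weight $n^2$ and each $E_s$-edge weight $1$, so that a maximum-weight matching first maximizes $|M\cap E_1|$ and then maximizes the number of edges, which is exactly the desired matching (applicant-complete precisely when a popular matching exists) --- but this only reduces the task to maximum-weight bipartite matching with polynomially bounded weights, which is not known to reduce to plain maximum-cardinality matching in NC. The other is a two-phase gluing: use the oracle to get one matching saturating all odd $G_1$-posts and one saturating $\mathcal{A}$, then merge them into a matching saturating both along alternating paths in the manner of the Mendelsohn--Dulmage theorem (or, equivalently, repair the oracle's maximum matching of $G'$ by such paths when $M\cap E_1$ is not yet maximum in $G_1$). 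That reduces the task to extracting, in parallel, a vertex-disjoint system of alternating paths realizing a prescribed family of transfers --- in effect an NC-constructive Mendelsohn--Dulmage theorem. The \emph{existence} of these paths is classical, but no parallel method for choosing a mutually consistent disjoint family is known, and this is exactly the gap that keeps the statement a conjecture. If either route can be carried through --- for instance, if $G'$ can be shown always to admit a maximum matching whose $E_1$-part is maximum in $G_1$ and the oracle's output can be massaged into one by an NC symmetric-difference argument --- the conjecture follows, and the same scheme with polynomially bounded weights would in addition place the optimal (rank-maximal and fair) popular matching problems with ties in NC relative to bipartite matching.
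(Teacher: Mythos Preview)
The paper does not prove this statement: it is explicitly stated as a conjecture and reiterated as an open problem in the final section, with no argument offered beyond the forward reduction of Theorem~\ref{thm:nc-reduction}. There is therefore nothing in the paper to compare your attempt against.

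Your writeup is itself not a proof but a roadmap, and you are candid about this: you reduce the task to producing, in NC with a cardinality oracle, an applicant-complete matching in $G'$ whose $E_1$-part is maximum in $G_1$, and you correctly isolate this as the obstruction. Both routes you sketch --- the weighted reformulation and the Mendelsohn--Dulmage gluing --- are reasonable lines of attack, and your diagnosis of why each is blocked (weighted matching is not known to NC-reduce to unweighted; selecting a vertex-disjoint family of alternating paths in parallel is not known to be in NC) is accurate. In short, your discussion goes well beyond what the paper offers on this point, but it remains, as you say, a discussion of obstacles rather than a proof; the conjecture stays open.
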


\section{Finding ``next" Stable Matching in NC} \label{sec:stable}
In this section, we consider the problem of finding ``next" stable matching. \cite{gusfield1989stable} mentioned that even if it is not possible to find the first stable matching fast in parallel, perhaps, after sufficient preprocessing, the stable matchings could be enumerated in parallel, with small parallel time per matching. Our results partially answer this question, given a stable matching, we can enumerate the ``next" stable matching in the stable matching lattice in polylog time. This result can be regarded as an application of the techniques used in \ref{sec:max-pop}, that is to find cycles in pseudoforest in NC. The main result is given by Theorem \ref{thm:next-stable}.

We give some useful definitions in the next section.
\subsection{Stable Marriage Problem}
Let $\mathcal{A}$ be a set of $n$ men and $\mathcal{B}$ be a set of $n$ women. For any man $m\in \mathcal{A}$, there is a strictly ordered preference list containing all the women in $\mathcal{B}$. For any woman $w\in\mathcal{B}$, there is a strictly ordered preference list containing all the men in $\mathcal{A}$. Person $p$ prefers $q$ to $r$, where $q$ and $r$ are of the opposite sex to $p$, if and only if $q$ precedes $r$ on $p$'s preference list.

A {\em matching} $M$ is one-to-one correspondence between the men and the women. If man $m$ and woman $w$ are matched in $M$, then $m$ and $w$ are called partners in $M$, written as $m = p_M(w)$ and $w = p_M(m)$. A pair $(m,w)$ is called a {\em blocking pair} for $M$, if $m$ and $w$ are not partners in $M$, but $m$ prefers $w$ to $p_M(m)$ and $w$ prefers $m$ to $p_M(w)$.

\begin{definition}
A matching $M$ is stable if and only if there is no blocking pair for $M$.
\end{definition}

\begin{definition}[Partial Order $\mathcal{M}$]
For a given stable marriage instance, stable matching $M$ is said to dominate stable matching $M'$, written $M \preceq M'$, if every man either prefers $M$ to $M'$ or is indifferent between them. We use the term strictly dominate, written $M \prec M'$, if $M \preceq M'$ and $M \neq M'$. We use the symbol $\mathcal{M}$ to represent the set of all stable matchings for a stable marriage instance. Then the set $\mathcal{M}$ is a partial order under the dominance relation, denoted by $(\mathcal{M}, \preceq)$.
\end{definition}

It is well-known that the partial order $(\mathcal{M}, \preceq)$ forms a distributive lattice. Hence, the unique minimal element in $\mathcal{M}$ with respect to $\preceq$, i.e. man-optimal stable matching (denoted by $M_0$), as well the unique maximal element, i.e. woman-optimal stable matching (denoted by $M_z$) is well-defined. 

\begin{definition}[Rotation]\label{def:rotation}
Let $k \geq 2$. A rotation $\rho$ is an ordered list of pairs\\
\centerline{$\rho = ((m_0, w_0),(m_1, w_1), \cdots, (m_{k-1}, w_{k-1}))$}
that are matched in some stable matching $M$ with the property that for every $i$ such that $0 \leq i \leq k-1$, woman $w_{i+1}$ (where $i+1$ is taken modulo $k$) is the highest ranked woman on $m_i$'s preference list satisfying:
\begin{enumerate}[(i)]
    \item man $m_i$ prefers $w_i$ to $w_{i+1}$, and
    \item woman $w_{i+1}$ prefers $m_i$ to $m_{i+1}$.
\end{enumerate}
In this case, we say $\rho$ is exposed in $M$.
\end{definition}

\begin{definition}[Elimination of a Rotation]
Let $\rho = ((m_0, w_0)$\\$ ,(m_1, w_1), \cdots, (m_{k-1}, w_{k-1}))$ be a rotation exposed in a stable matching $M$. The rotation $\rho$ is eliminated from $M$ by matching $m_i$ to $w_{(i+1)\mod{k}}$ for all $0\leq i \leq k-1$, leaving all other pairs in $M$ unchanged, i.e. matching $M$ is replaced by matching $M'$, where\\
\centerline{$M' := M\backslash\rho\ \cup\ \{(m_0, w_1),(m_1, w_2), \cdots, (m_{k-1}, w_{0})\}$.}
Note that the resulting matching $M'$ is also stable.
\end{definition}

\begin{lemma}[\cite{gusfield1989stable}, Theorem 2.5.1] \label{immediate-ralation}
If $\rho$ is exposed in $M$, then $M$ immediately dominates $M\backslash\rho$, i.e. there is no stable matching $M'$ such that $M \prec M'$ and $M' \prec M\backslash\rho$.
\end{lemma}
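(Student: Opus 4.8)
The plan is to argue by contradiction. Suppose some stable matching $N$ satisfies $M \prec N \prec M\backslash\rho$, and write $\rho = ((m_0,w_0),\dots,(m_{k-1},w_{k-1}))$ and $M_\rho := M\backslash\rho$, so that $M(m_i)=w_i$, $M_\rho(m_i)=w_{i+1}$ (indices mod $k$), and $M,M_\rho$ agree on every person outside $\rho$. I will show that any such $N$ must in fact coincide with $M$ or with $M_\rho$, which contradicts strictness and proves immediate domination. Recall that $M_\rho$ is stable and that $M \prec M_\rho$ (men in $\rho$ strictly worsen by clause (i) of Definition~\ref{def:rotation}, everyone else is unchanged), so the claim is exactly that nothing lies strictly between them.

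The first step is a \emph{sandwiching} observation. From $M \preceq N \preceq M_\rho$, read man by man using that preference lists are strict and complete: for a man $m \notin \{m_0,\dots,m_{k-1}\}$ we have $M(m)=M_\rho(m)$, and $m$ weakly prefers $M(m)$ to $N(m)$ and $N(m)$ to $M(m)$, forcing $N(m)=M(m)$; and for each $m_i$, the partner $N(m_i)$ lies weakly between $w_i$ and $w_{i+1}$ on $m_i$'s preference list. Thus $N$ differs from $M$ only on the men of $\rho$.

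The second step is the key claim: $N(m_i)\in\{w_i,w_{i+1}\}$ for every $i$. Suppose instead $N(m_i)=w$ where $m_i$ strictly prefers $w_i$ to $w$ and $w$ to $w_{i+1}$. Since clause (i) of Definition~\ref{def:rotation} holds for $w$ but $w_{i+1}$ is the \emph{highest-ranked} woman on $m_i$'s list satisfying (i)--(ii), clause (ii) must fail for $w$: she prefers her $M$-partner $M(w)$ to $m_i$. Now locate $M(w)$ inside $N$: it is not matched to $w$ in $N$ (that slot is $m_i$), so by the sandwiching step $M(w)$ cannot be a man outside $\rho$ (such a man keeps his $M$-partner $w$ in $N$), hence $M(w)=m_j$ for some $j$, giving $w=M(m_j)=w_j$. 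But then $N(m_j)\neq w$ (the woman $w$ is matched to $m_i\neq m_j$ in $N$), so by sandwiching $m_j$ strictly prefers $w_j=w$ to $N(m_j)$, while $w$ prefers $m_j=M(w)$ to $m_i=N(w)$; thus $(m_j,w)$ blocks $N$, contradicting stability of $N$. With the claim in hand, propagation finishes the argument: a single man occupies $w_{i+1}$, so $N(m_i)=w_{i+1}$ forces $N(m_{i+1})=w_{i+2}$, and going once around the cycle gives $N=M_\rho$; dually, $N(m_i)=w_i$ forces $N(m_{i-1})=w_{i-1}$ around the cycle, giving $N=M$. Either way $N\in\{M,M_\rho\}$, contradicting $M \prec N \prec M_\rho$.

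I expect the second step to be the main obstacle: extracting a blocking pair of $N$ requires combining the maximality clause in the definition of a rotation with the short case analysis on where $M(w)$ is matched in $N$. The sandwiching observation and the cyclic propagation are routine once that step is established.
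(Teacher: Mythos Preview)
The paper does not supply its own proof of this lemma: it is quoted verbatim from \cite{gusfield1989stable} (Theorem~2.5.1) and used as a black box, so there is no in-paper argument to compare against. Your proof is correct and self-contained. The sandwiching step is sound under the paper's definition of $\preceq$ (men weakly prefer $M$, hence any man outside $\rho$ is frozen and each $m_i$'s $N$-partner lies in the closed interval $[w_i,w_{i+1}]$ on his list); the key step uses exactly the maximality clause in Definition~\ref{def:rotation} to force $w$'s $M$-partner into $\rho$ and then produces the blocking pair $(m_j,w_j)$; and the cyclic propagation is immediate from injectivity of $N$. One small remark that would streamline the write-up: once sandwiching pins all men outside $\rho$, the women available to $\{m_0,\dots,m_{k-1}\}$ in $N$ are exactly $\{w_0,\dots,w_{k-1}\}$, so the hypothetical $w$ in your second step is automatically some $w_j$---you can reach that conclusion without the case split on whether $M(w)$ lies in $\rho$. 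This is only a cosmetic shortcut; your argument as written is already complete.
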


\begin{theorem} \label{thm:next-stable}
Given a stable matching $M$, there is an NC algorithm that outputs stable matching $M\backslash\rho$ for each rotation $\rho$ exposed in $M$ or determines $M$ is the woman-optimal matching.
\end{theorem}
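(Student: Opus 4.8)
The plan is to show that exposed rotations in a stable matching $M$ can be identified in parallel, and that each one, when eliminated, yields a distinct ``next'' stable matching in the lattice. First I would build, from $M$, a directed graph analogous to the switching graph of Section~\ref{sec:max-pop}: for each man $m$, let $w_M(m)$ be the highest-ranked woman $w$ on $m$'s list such that $m$ prefers $w$ to $p_M(m)$ and $w$ prefers $m$ to $p_M(w)$ (if such a woman exists). Define a function on the men (equivalently on the matched pairs of $M$) by sending the pair $(m, p_M(m))$ to the pair $(m', p_M(m'))$ where $m' = p_M(w_M(m))$. Each pair has out-degree at most one under this map, so the resulting graph is a directed pseudoforest, exactly the structure handled in Section~\ref{sec:pseudoforest}. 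By Definition~\ref{def:rotation}, a rotation exposed in $M$ is precisely a cycle of this map: $(m_0,w_0) \to (m_1,w_1)\to\cdots\to(m_{k-1},w_{k-1})\to(m_0,w_0)$, where $w_{i+1} = w_M(m_i)$ and $m_{i+1} = p_M(w_{i+1})$. So the exposed rotations are in bijection with the cycles of the pseudoforest.

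Next I would invoke the NC cycle-finding machinery already developed: compute the function $w_M(\cdot)$ (a local computation at each man, hence constant parallel time after sorting the lists), form the pseudoforest, find its weakly connected components via Theorem~\ref{thm:cc-nc}, and in each component detect the unique cycle using transitive closure (Theorem~\ref{thm:transitive-closure}) or the incidence-rank method (Theorem~\ref{thm:rank-nc}), just as in Section~\ref{sec:pseudoforest}. Each detected cycle is output as a rotation $\rho$, and in parallel one forms $M\backslash\rho$ by replacing $(m_i,w_i)$ with $(m_i,w_{(i+1)\bmod k})$ for the men on that cycle and leaving all other pairs fixed; since distinct cycles are vertex-disjoint this is done simultaneously for all exposed rotations. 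If the pseudoforest has no cycle at all (every component is a tree with a sink, i.e. no man has a valid $w_M(m)$), then no rotation is exposed, and I would argue this is exactly the condition that $M$ is woman-optimal: a stable matching with no exposed rotation cannot be strictly dominated by any other stable matching (by Lemma~\ref{immediate-ralation} any $M'$ with $M \prec M'$ would be reached, step by step in the lattice, through some exposed rotation), so $M = M_z$.

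The correctness of the ``next'' claim rests on Lemma~\ref{immediate-ralation}: since $\rho$ is exposed in $M$, $M$ immediately dominates $M\backslash\rho$, so $M\backslash\rho$ is a cover of $M$ in the stable-matching lattice; distinct exposed rotations give distinct (and incomparable) covers, and these are exactly the matchings ``one step above'' $M$. The main obstacle I anticipate is not the parallel cycle detection, which is essentially a direct reuse of Section~\ref{sec:pseudoforest}, but rather the bookkeeping needed to verify that my local function $w_M(\cdot)$ faithfully encodes the rotation structure --- in particular checking that a cycle of the map satisfies \emph{both} conditions (i) and (ii) of Definition~\ref{def:rotation} simultaneously with the ``highest ranked'' minimality condition, and that every exposed rotation arises this way with no spurious cycles. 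A secondary point requiring care is confirming that the absence of cycles is genuinely equivalent to woman-optimality rather than merely necessary; this I would handle by appealing to the standard fact (implicit in the lattice structure from \cite{gusfield1989stable}) that every non-woman-optimal stable matching exposes at least one rotation.
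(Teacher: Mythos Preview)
Your overall plan matches the paper's approach closely: define a ``next'' function on men, build the resulting directed pseudoforest (the paper calls it the switching graph $H_M$), detect its cycles via the NC machinery of Section~\ref{sec:pseudoforest}, and output each cycle as an exposed rotation. The paper implements the local computation via \emph{reduced lists} (delete every pair $(m',w)$ where $w$ prefers $p_M(w)$ to $m'$; then $p_M(m)$ is the first entry and $s_M(m)$ the second entry of $m$'s reduced list), but that is just an implementation of the same local scan you describe.

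There is, however, one concrete error in your definition of $w_M(m)$. You require that $m$ prefers $w$ to $p_M(m)$ \emph{and} that $w$ prefers $m$ to $p_M(w)$. Those two conditions together say precisely that $(m,w)$ is a blocking pair for $M$, which is impossible since $M$ is stable; with your definition the graph is always empty and the algorithm would wrongly declare every stable matching woman-optimal. The correct requirement, matching the paper's $s_M(m)$ and condition~(i) of Definition~\ref{def:rotation}, is only that $w$ prefers $m$ to $p_M(w)$; the inequality on $m$'s side goes the \emph{other} way (he prefers $p_M(m)$ to $w$), and in fact is automatic from stability once the condition on $w$ holds. With that single correction your map coincides with the paper's $next_M$, cycles of the map are exactly the exposed rotations (the paper argues this directly from Definition~\ref{def:rotation} and the uniqueness of $next_M(m)$), and the absence of any rotation is equivalent to $M=M_z$, exactly as you anticipated.
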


\subsection{Algorithmic Results}
We describe the NC algorithm to find the ``next" stable matching in this section. 

\begin{algorithm}
{\bf Input:} Stable matching $M$ and preference lists $mp$ and $wp$.\\
{\bf Output:} $M\backslash\rho$ or determine $M$ is the woman-optimal matching.\\
Compute ranking matrices $mr$ and $wr$; // constant steps\\
Compute reduced preference lists ${mp}'$ and ${wp}'$; // logarithmic number of steps\\
Construct $H_M$ from ${mp}'$;\\
{\bf if} $H_M$ is not empty {\bf then}\\
\h Find all simple cycles(rotations) in $H_M$;\\
\h {\bf for each} rotation $\rho$ in $H_M$ {\bf in parallel do}\\
\h\h return $M\backslash\rho$;\\
{\bf else}\\
\h return $M$ is the woman-optimal matching;\\
\caption{``next" Stable Matching}
\label{alg:stable-matching}
\end{algorithm}

Let $M$ be a stable matching. For any man $m$, let $s_M(m)$ denote the highest ranked woman on $m$'s preference list such that $w$ prefers $m$ to $p_M(w)$. Let $next_M(m)$ denote woman $s_M(m)$'s partner in $M$. Note that since $M$ is stable, $m$ prefers $p_M(m)$ to $s_M(m)$.

Now let $m$ be any man who has different partners in $M$ and $M_z$ and let $w$ be $m$'s partner in $M_z$. Since $M_z$ is woman-optimal, $m$ prefers $p_M(m)$ to $w$ and $w$ prefers $m$ to $p_M(w)$. Hence, $s_M(m)$ exists. If $s_M(m)$ exists and $m' = next_M(m)$, then $s_M(m')$ exists as well. Otherwise, $m'$ and $s_M(m)$ are partners in $M_z$, so $m$ prefers $s_M(m)$ to his partner $w$ in $M_z$ and $s_M(m)$ prefers $m$ to her partner $m'$ in $M_z$, contradicting the stability of $M_z$. Denote $D$ the set of man who has different partners in $M$ and $M_z$, then for any man $m \in D$, $next_M(m) \in D$. Later we will show that the algorithm does need to know $M_z$.

Similar to the {\em switching graph} of popular matching, we define the {\em switching graph} of stable matching $M$ as a directed graph $H_M$ with a vertex for each man in $D$ and a directed edge from the vertex for $m$ to the vertex for $next_M(m)$, which is also in $H_M$. Some simple properties of {\em switching graph} $H_M$ is shown in the following lemma.

\begin{lemma}
Let $M$ be a stable matching other than the woman-optimal matching $M_z$, let $H_M$ be the switching graph of $M$, then
\begin{enumerate}[(i)]
    \item Each vertex in $H_M$ has outdegree exactly one.
    \item Each component of $H_M$ contains a single simple cycle.
\end{enumerate}
\end{lemma}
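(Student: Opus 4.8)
\emph{Proof sketch.} The plan is to derive both parts directly from the functional-graph structure of $H_M$, on top of the facts already established in the text.

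For part (i), recall that $H_M$ has vertex set $D$ and that its edge set is exactly $\{(m, next_M(m)) : m \in D\}$. In the discussion preceding the lemma it has already been shown that for every $m \in D$ the woman $s_M(m)$ exists and that $next_M(m) = p_M(s_M(m))$ again belongs to $D$; since $s_M(m)$ is \emph{uniquely} determined (it is the highest-ranked woman on $m$'s list with the stated property), the vertex $m$ has exactly one outgoing edge, and its head is a vertex of $H_M$. Hence every vertex of $H_M$ has outdegree exactly one. I would also record the harmless observation that $H_M$ has no self-loop: if $next_M(m) = m$ then $s_M(m) = p_M(m)$, but the defining property of $s_M(m)$ requires $s_M(m)$ to prefer $m$ to $p_M(s_M(m))$, i.e. $p_M(m)$ to prefer $m$ to $m$, which is impossible; so every directed cycle of $H_M$ has length at least $2$, matching the requirement $k \geq 2$ in Definition~\ref{def:rotation}.

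For part (ii), part (i) says precisely that $H_M$ is a finite directed $1$-forest (functional graph) in the sense of Section~\ref{sec:pseudoforest}. By the observation recorded there, each weakly connected component of a directed pseudoforest contains either a single sink vertex or a single cycle; since no vertex of $H_M$ has outdegree $0$, no component contains a sink, so each component contains exactly one cycle. If a self-contained argument is wanted: a weakly connected component with $t$ vertices carries exactly $t$ edges (one per vertex), and a connected graph with equally many vertices and edges has exactly one cycle; moreover, following the unique out-edges from any vertex must eventually repeat a vertex, yielding a directed cycle, which is simple because out-degree one fixes each successor. Hence each component of $H_M$ contains a single simple cycle.

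The proof is essentially bookkeeping on top of the preceding paragraphs, so I do not expect a genuine obstacle; the two points that need care are (a) checking that $next_M$ is a well-defined map from $D$ to $D$ --- this is the role of the stability-of-$M_z$ argument already given, where one should stress that the algorithm never actually computes $M_z$, only uses the \emph{existence} of $s_M(m)$ --- and (b) noting that $D = \emptyset$ exactly when $M = M_z$, so the hypothesis $M \neq M_z$ is what makes $H_M$ nonempty and puts us in the first branch of Algorithm~\ref{alg:stable-matching}.
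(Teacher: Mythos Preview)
Your proposal is correct and follows essentially the same approach as the paper. The paper's proof of (i) is a one-liner ``direct from the definition,'' and for (ii) it argues by contradiction (no cycle would force a sink; two cycles would force a vertex of outdegree at least two), which is the same functional-graph reasoning you give, just phrased differently; your appeal to the pseudoforest observation in Section~\ref{sec:pseudoforest} and the vertex/edge count are equivalent shortcuts, and your added remarks on self-loops and on $D=\emptyset \Leftrightarrow M=M_z$ are helpful but not required.
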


\begin{proof}
(i) is direct from the definition of a switching graph. (ii) No vertex points to itself, so there is no self loop in $H_M$. If there is no cycle in one component, then there exists at least one sink vertex (consider the topological sort of $H_M$), contradicting (i). If there are two cycles in one component, consider any path that connects these two cycles. There must be a vertex with at least two outgoing edges again contradicting (i).
\end{proof}

From Definition \ref{def:rotation}, it is easy to see that any such simple cycle defines the men in a rotation exposed in $M$, in the order that they appear in the rotation. On the other hand, based on the uniqueness of $next_M(m)$ for each $m \in D$, if $m$ belongs to some rotation $\rho$, e.g. $m = m_i$, then $m_{i+1}$ is uniquely determined, that is $next_M(m)$. Hence the men in $\rho$ must be a simple cycle in $H_M$. 

We know from Section \ref{sec:pseudoforest} that every cycle in $H_M$ can be found in NC. It is obvious that the elimination of a rotation can be done in one parallel step. Thus, we are left to show that $H_M$ can be constructed in NC. 

Let us assume that a stable marriage instance is described by the sets of preference lists, represented as matrices $mp$ and $wp$ defined by
\begin{itemize}
    \item $mp[m,i] = w$ if woman $w$ is ranked of $i$ in $m$'s preference list
    \item $wp[w,i] = m$ if man $m$ is ranked of $i$ in $w$'s preference list
\end{itemize}

We also define the ranking matrices $mr$ and $wr$ as below
\begin{itemize}
    \item $mr[m,w] = i$ if woman $w$ is ranked of $i$ in $m$'s preference list
    \item $wr[w,m] = i$ if man $m$ is ranked of $i$ in $w$'s preference list
\end{itemize}

We need to identify $s_M(m)$ and $next_M(m)$ for each man $m$. Suppose for each woman $w$ we delete all pairs $(m', w)$ such that $w$ prefers $p_M(w)$ to $m'$. In the resulting preference lists, which we call reduced lists, $p_M(w)$ is the last entry in $w$'s list, and $p_M(m)$ is the first entry in $m$'s list for if any woman $w'$ remains above $p_M(m)$, then $(m, w')$ blocks $M$. Moreover, $s_M(m)$ is the second entry in $m$'s list if exists, for by definition, it is the highest ranked woman $w$ on $m$'s list such that $w$ prefers $m$ to $p_M(w)$. $next_M(m)$ is simply the partner in $M$ of woman $s_M(m)$. 

From the algorithmic aspect, for each entry $(m,w)$ to be deleted in parallel, we call the ranking matrix $mr$ to obtain woman $w$'s rank on $m$'s list. Then call the preference matrix $mp$ and use soft-deletion, i.e. mark the entry $mp[m,mr[m,w]]$ zero. After each entry is soft-deleted, we can compress the preference list using parallel prefix sum technique. The resulting preference lists are reduced lists. Now we obtain all pairs $(m, next_M(m))$ and it is easy to construct $H_M$.

\subsection{Example of Stable Matchings}

Figure \ref{fig:stable-matching} is an example of a stable marriage instance. The reader can verify that the matching $M$ denoted by underlining is stable.

\begin{figure}[h]
    \centering
        \begin{subfigure}[b]{0.45\textwidth}
        \centering
        $\begin{aligned}
            m_1: &\sh w_5\sh w_7\sh w_1\sh w_2\sh w_6\sh \underline{w_8}\sh w_4\sh w_3\\
            m_2: &\sh w_2\sh \underline{w_3}\sh w_7\sh w_5\sh w_4\sh w_1\sh w_8\sh w_6\\
            m_3: &\sh w_8\sh \underline{w_5}\sh w_1\sh w_4\sh w_6\sh w_2\sh w_3\sh w_7\\
            m_4: &\sh w_3\sh w_2\sh w_7\sh w_4\sh w_1\sh \underline{w_6}\sh w_8\sh w_5\\
            m_5: &\sh \underline{w_7}\sh w_2\sh w_5\sh w_1\sh w_3\sh w_6\sh w_8\sh w_4\\
            m_6: &\sh \underline{w_1}\sh w_6\sh w_7\sh w_5\sh w_8\sh w_4\sh w_2\sh w_3\\
            m_7: &\sh \underline{w_2}\sh w_5\sh w_7\sh w_6\sh w_3\sh w_4\sh w_8\sh w_1\\
            m_8: &\sh w_3\sh w_8\sh \underline{w_4}\sh w_5\sh w_7\sh w_2\sh w_6\sh w_1\\
        \end{aligned}$
        \caption{Men's preferences}
    \end{subfigure}
    ~
    \begin{subfigure}[b]{0.45\textwidth}
        \centering
        $\begin{aligned}
            w_1: &\sh m_5\sh m_3\sh m_7\sh \underline{m_6}\sh m_1\sh m_2\sh m_8\sh m_4\\
            w_2: &\sh m_8\sh m_6\sh m_3\sh m_5\sh \underline{m_7}\sh m_2\sh m_1\sh m_4\\
            w_3: &\sh m_1\sh m_5\sh m_6\sh \underline{m_2}\sh m_4\sh m_8\sh m_7\sh m_3\\
            w_4: &\sh \underline{m_8}\sh m_7\sh m_3\sh m_2\sh m_4\sh m_1\sh m_5\sh m_6\\
            w_5: &\sh m_6\sh m_4\sh m_7\sh \underline{m_3}\sh m_8\sh m_1\sh m_2\sh m_5\\
            w_6: &\sh m_2\sh m_8\sh m_5\sh m_3\sh \underline{m_4}\sh m_6\sh m_7\sh m_1\\
            w_7: &\sh m_7\sh \underline{m_5}\sh m_2\sh m_1\sh m_8\sh m_6\sh m_4\sh m_3\\
            w_8: &\sh m_7\sh m_4\sh \underline{m_1}\sh m_5\sh m_2\sh m_3\sh m_6\sh m_8\\
        \end{aligned}$
        \caption{Women's preferences}
    \end{subfigure}
    \caption{The stable marriage instance of size 8 and the stable matching $M$ denoted by underlining}
    \label{fig:stable-matching}
\end{figure}

Figure \ref{fig:reduced-stable} shows the reduced lists of the men for the stable matching $M$. The second column corresponds to $s_M(m)$ for each $m$.
\begin{figure}[h]
    \centering
        $\begin{aligned}
        m_1: &\sh \underline{w_8} \sh w_3\\
        m_2: &\sh \underline{w_3}\sh w_6\\
        m_3: &\sh \underline{w_5}\sh w_1\sh w_6\sh w_2\\
        m_4: &\sh \underline{w_6}\sh w_8\sh w_5\\
        m_5: &\sh \underline{w_7}\sh w_2\sh w_1\sh w_3\sh w_6\\
        m_6: &\sh \underline{w_1}\sh w_5\sh w_2\sh w_3\\
        m_7: &\sh \underline{w_2}\sh w_5\sh w_7\sh w_8\sh w_1\\
        m_8: &\sh \underline{w_4}\sh w_2\sh w_6\\
        \end{aligned}$
    \caption{The reduced lists of the men for the stable matching $M$}
    \label{fig:reduced-stable}
\end{figure}

Finally we give the switching graph $H_M$ for the stable matching $M$ in Figure \ref{fig:stable-switching}.

\begin{figure}[ht]
    \centering
    \begin{tikzpicture}[> = stealth, auto, node distance=3cm, ultra thick,
   node_style/.style={circle,draw=black,font=\sffamily\Large\bfseries},
   edge_style/.style={draw=black, ultra thick}]
    \node[node_style] (m1) at (0,0) {$m_1$};
    \node[node_style] (m2) at (1,2) {$m_2$};
    \node[node_style] (m3) at (4,2) {$m_3$};
    \node[node_style] (m4) at (2,0) {$m_4$};
    \node[node_style] (m5) at (3,-2) {$m_5$};
    \node[node_style] (m6) at (6,2) {$m_6$};
    \node[node_style] (m7) at (4,0) {$m_7$};
    \node[node_style] (m8) at (5,-2) {$m_8$};
    \path[->] (m1) edge node {} (m2);
    \path[->] (m2) edge node {} (m4);
    \path[->] (m4) edge node {} (m1);
    \path[->, bend left=20] (m3) edge node {} (m6);
    \path[->, bend left=20] (m6) edge node {} (m3);
    \path[->] (m7) edge node {} (m3);
    \path[->] (m5) edge node {} (m7);
    \path[->] (m8) edge node {} (m7);
    \end{tikzpicture}
    \caption{The switching graph $H_M$}
    \label{fig:stable-switching}
\end{figure}
\section{Summary and Open Problems}
This paper has established the result that the popular matching problem without ties is in NC. The notion of pseudoforest may have other applications for designing parallel algorithms.

We have shown that maximum-cardinality bipartite matching is NC-reducible to popular matching. One open problem is Conjecture \ref{conj:nc-reduction-left}. If it is true, then it means these two problems are NC-equivalent. The other open problem is establishing the NC reduction among several other matching problems in preference systems such as Pareto-optimal matching and rank-maximal matching. Another open problem is to determine if there is an RNC algorithm for popular matching problem with ties.

\bibliographystyle{IEEEtran}
\bibliography{main}


\end{document}